\begin{document}
\newcommand{\modif}[1]{\textcolor{red}{\textbf{#1}}}

\newtheorem{theorem}{Theorem}
\newtheorem{corollary}{Corollary}
\newtheorem{definition}{Definition}
\newtheorem{lemma}{Lemma}
\newtheorem{fact}{Fact}
\newtheorem{proposition}{Proposition}
\newtheorem{example}{Example}
\newtheorem{remark}[theorem]{Remark}
\newtheorem{conjecture}[theorem]{Conjecture}

\newcommand{\MC}{\mathcal{M}}
\newcommand{\MCadj}{\mathcal{M}_{adj}}
\newcommand{\MCany}{\mathcal{M}_{any}}
\newcommand{\MCanyrev}{\mathcal{M}_{any}^*}
\newcommand{\bigO}{\mathcal{O}}
\newcommand{\Sn}{\mathcal{S}_n}
\newcommand{\mz}{\mathit{mz}}
\newcommand{\MZ}{\mathit{MZ}}
\newcommand{\perm}{\mathit{perm}}
\newcommand{\MCblocks}{\mathcal{M}_{\mathit{blocks}}}
\newcommand{\myqed}{\hfill$\square$}

\def\polylog{\operatorname{polylog}}

\pagestyle{headings}  

\title{On computing the total displacement number via weighted Motzkin paths}

\author[1]{Andreas B\"artschi\thanks{andreas.baertschi@inf.ethz.ch}}
\author[1]{Barbara Geissmann\thanks{barbara.geissmann@inf.ethz.ch}}
\author[1]{Daniel Graf\thanks{daniel.graf@inf.ethz.ch}}
\author[1]{Tomas Hruz\thanks{tomas.hruz@inf.ethz.ch}}
\author[1]{Paolo Penna\thanks{paolo.penna@inf.ethz.ch}}
\author[1]{Thomas Tschager\thanks{thomas.tschager@inf.ethz.ch}}
\affil[1]{ETH Z\"urich, Department of Computer Science}

\def\REPOURL{\url{http://people.inf.ethz.ch/grafdan/motzkin/}}

\date{}
\maketitle
%

\begin{abstract}
	Counting the number of permutations of a given total displacement is equivalent to counting weighted Motzkin paths of a given area (Guay-Paquet and Petersen~\cite{guay2014generating}). The former combinatorial problem is still open. 
	In this work we show that this connection allows to construct efficient algorithms for counting and for sampling such permutations.
	These algorithms provide a tool to better understand the original combinatorial problem.
	A by-product of our approach is a different way of counting based on certain ``building sequences'' for Motzkin paths, which may be of independent interest. 

\end{abstract}

\section{Introduction}
Consider the set $\Sn$ of all permutations over $n$ elements $\{1,2,\ldots,n\}$. Diaconis and Graham~\cite{diaconis1977spearman} studied the \emph{disarray} statistic of permutations, also called \emph{total displacement} by Knuth~\cite[Problem 5.1.1.28]{donald1999art}, defined as follows. For any permutation $\pi$ define its distance to the identity permutation as the sum of the displacements of all elements:
\[
D(\pi):= \sum_{i=1}^n |i - \pi(i)| = 2\sum_{\pi(i)> i} (\pi(i)-i).
\]
Note that this distance is always \emph{even}. The following natural question is still unresolved:
\begin{quote}
	\emph{How many permutations at a given distance $2d$ from the identity permutation are there?}
\end{quote}	
That is, one would like to know the following \emph{total displacement} number: 
\[
D(n,d):=|\{\pi \in \Sn \mid D(\pi)=2d\}|,
\]
that is the number of permutations of total displacement equal to $2d$.
So far, a closed formula for arbitrary $n$ is only known for fixed $d$ \emph{up to seven} ($d\leq 7$)~\cite{guay2014generating}.
Entry A062869 \cite{oesis_triangle}
of the OEIS reports values of $D(n,d)$ for small $n$ and $d$ ($n\leq 30$).

Guay-Paquet and Petersen~\cite{guay2014generating} made recently significant progress in this question by showing that these permutations are in correspondence to \emph{Motzkin paths} whose \emph{area} is exactly the distance $d$ under consideration. 
Their result shows that, for any Motzkin path (see below) of area $d$, one can easily calculate the number of permutations that correspond to this specific path. Therefore the problem above translates into the problem of counting \emph{weighted} Motzkin paths of a given area. 

A Motzkin path consists of a sequence of $U$ (Up-right), $H$ (Horizontal-right), and $D$ (Down-right) moves over the two-dimensional lattice starting at coordinate $(0,0)$ and such that the path never goes below the  $y=0$ axis and ends on the $y=0$ axis (see Figure~\ref{fig:Motzkin-and-permutation} (right) for an example). For any such path, one can consider its \emph{width} and its \emph{area} defined as the number of moves and the area of the region between the $y=0$ axis and the path.
\begin{figure}[t]
	\centering
	\includegraphics{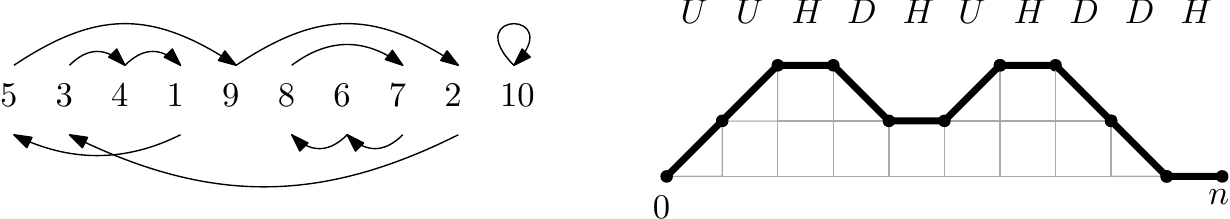}
	\caption{A permutation and its Motzkin path of width $10$ and area $12$.}
	\label{fig:Motzkin-and-permutation}
\end{figure}
The permutations over $n$ elements with total displacement $2d$ map into Motzkin paths of width $n$ and area $A=d$. 

For instance, the permutation in Figure~\ref{fig:Motzkin-and-permutation} is mapped into a Motzkin path according to the following rule. The first element $\pi(1)=5$ is mapped into a $U$ because the element at position $1$ goes to a higher position (right) and also the number coming into position $1$ is higher than $1$: $\pi(1)>1<\pi^{-1}(1)$. The fourth element is mapped into $D$ because the opposite happens: $\pi(4)=1<4>3=\pi^{-1}(4)$. Finally, elements $3,5,7,10$ are mapped into $H$ because neither of the previous cases apply. 

Let $h_i$ denote the maximum height of the path during move $i$ (for $U$: after the move, for $D$: before the move, and anytime for $H$).
Then the number $\omega(\mz)$ of permutations that map to a certain Motzkin path $\mz$ is~\cite{guay2014generating}
\begin{eqnarray}
\omega(\mz) = \prod_i^n\omega_i \text{ where } \omega_i = 
\begin{cases} h_i & \text{if } \mz_i=U \text{ or } \mz_i = D,\\ 2h_i+1 & \text{if } \mz_i = H. \label{eq:perm-product-introduction}
\end{cases}
\end{eqnarray}
We also refer to $\omega(\mz)$ as the \emph{weight} of $\mz$.
In the example in Figure~\ref{fig:Motzkin-and-permutation} this gives $1\cdot 2 \cdot 5 \cdot 2 \cdot 3 \cdot 2 \cdot 5 \cdot 2 \cdot 1 \cdot 1 = 1200$.
Note how this formula separates over the moves of the Motzkin path. This independence is what we will exploit in this article.

\begin{theorem}[\cite{guay2014generating}]\label{th:displacement_equals_weighted_paths}
	For any $n$ and $d$, let $\MZ(n,A)$ be the set of all Motzkin paths of width $n$ and area $A=d$. Then it holds that
	\begin{equation}\label{eq:displacement_formula_via_paths}
	D(n,d) = \sum_{\mz \in \MZ(n,A)} \omega(\mz).
	\end{equation}
\end{theorem}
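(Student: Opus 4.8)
The plan is to make the correspondence of Theorem~\ref{th:displacement_equals_weighted_paths} explicit by constructing the map $\phi$ from $\Sn$ to Motzkin paths induced by the $U/H/D$ rule described in the excerpt, and to establish two facts: that $\phi(\pi)$ always has width $n$ and area $d=D(\pi)/2$, and that the preimage $\phi^{-1}(\mz)$ of each path has total size exactly $\omega(\mz)$. Grouping the sum $\sum_{\pi}1$ according to the value of $\phi$ then gives $|\{\pi:D(\pi)=2d\}|=\sum_{\mz}|\phi^{-1}(\mz)|=\sum_{\mz\in\MZ(n,A)}\omega(\mz)$, which is precisely the claimed identity \eqref{eq:displacement_formula_via_paths}.

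First I would verify that $\phi$ is well defined and area-preserving. Set $a_k:=|\{j\le k:\pi(j)>k\}|$ for $0\le k\le n$; since $\pi$ is a bijection this also counts the values $\le k$ lying at positions $>k$, and $a_0=a_n=0$. A short case analysis on the four possibilities for $(\pi(i)\gtrless i,\ \pi^{-1}(i)\gtrless i)$ shows that $a_i-a_{i-1}=+1$ exactly when position $i$ is a $U$, equals $-1$ exactly when it is a $D$, and equals $0$ for an $H$ (the fixed-point case $\pi(i)=i$ included). Hence $a_0,\dots,a_n$ is the height profile of $\phi(\pi)$, which is therefore a genuine Motzkin path of width $n$. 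For the area, the decomposition $|i-\pi(i)|=|\{k:\min(i,\pi(i))\le k<\max(i,\pi(i))\}|$ yields $D(\pi)=2\sum_k a_k$, while the trapezoidal area of the path equals $\tfrac12\sum_i(a_{i-1}+a_i)=\sum_{k=1}^{n-1}a_k$; together these give area $=d$.

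The heart of the argument, and the step I expect to be the main obstacle, is showing $|\phi^{-1}(\mz)|=\omega(\mz)$. I would reconstruct the permutations in a fiber by sweeping $i=1,\dots,n$ while maintaining two families of \emph{open arcs}: positions $j<i$ whose value is not yet placed (hence $>i$), and values $v<i$ whose position is not yet placed (hence $>i$). Both families have size equal to the current height $h=a_{i-1}$. Reading off the prescribed move fixes the number of free choices offered at position $i$: a $U$ opens one arc on each side and offers a single option; an $H$ either closes one position-arc ($h$ ways), or closes one value-arc ($h$ ways), or is a fixed point ($1$ way), for $2h+1$ options; and a $D$ closes one arc on each side, independently, for $h^2$ options. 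The delicate points are that the two families stay equinumerous throughout, that no closing step ever runs out of arcs (guaranteed by $h\ge1$ at every down/level-closing move), and that the map from choice-sequences to permutations with word exactly $\mz$ is genuinely bijective; granting these, $|\phi^{-1}(\mz)|=\prod_{H}(2h+1)\prod_{D}h^2$.

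It then remains to reconcile this count with the stated weight $\omega(\mz)=\prod_{U}h_i\,\prod_{D}h_i\,\prod_{H}(2h_i+1)$, where a $U$ is weighted by its height \emph{after} the step and a $D$ by its height \emph{before}. Since along a balanced path the number of up-steps reaching a given height $h$ equals the number of down-steps leaving height $h$, we have $\prod_{U}(\text{height after})=\prod_{D}(\text{height before})$; multiplying the fiber count by this identity redistributes one factor of $h$ from each $D$ to its matching $U$, turning $\prod_{D}h^2$ into $\prod_{U}h_i\prod_{D}h_i$ and producing exactly $\omega(\mz)$. Alternatively, one can avoid this final algebraic step by splitting the $h^2$ closing choices of each $D$ as $h$ at the $D$ and $h$ at its matching $U$ through a Foata--Zeilberger-style history, realizing the weights $h_i$ directly.
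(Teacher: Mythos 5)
Your proof is correct, but note for context that the paper itself never proves Theorem~\ref{th:displacement_equals_weighted_paths}: it is imported from Guay-Paquet and Petersen \cite{guay2014generating}, so the only internal material to compare against is the proof of Corollary~\ref{cor:path-to-permutation} in the appendix, which presupposes the theorem and samples from a fiber. Your argument is essentially the Fran\c{c}on--Viennot/Foata--Zeilberger history bijection underlying the cited result, and all its components check out: the height profile $a_k=|\{j\le k:\pi(j)>k\}|$ does satisfy $a_i-a_{i-1}=[\pi(i)>i]-[\pi^{-1}(i)<i]$, giving $+1,-1,0$ exactly on $U,D,H$; the identity $D(\pi)=2\sum_k a_k$ against the trapezoidal area $\sum_{k=1}^{n-1}a_k$ correctly handles the half-cells of the diagonal steps (e.g.\ the peak $UD$ has area $1$ and corresponds to the transposition with $D(\pi)=2$); and the arc sweep gives $1$, $2h+1$, $h^2$ choices at $U$, $H$, $D$, with the two arc families equinumerous by bijectivity of $\pi$ and never exhausted since the path stays nonnegative ($h\ge 1$ before every $D$). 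The ``delicate points'' you defer are genuinely routine: the choices are read off uniquely from any permutation in the fiber, and every choice sequence yields a bijection with word exactly $\mz$ because each arc opened at $j$ closes at some $i>j$ and $a_n=0$ forces all arcs to close. One difference worth flagging: the paper's appendix proof of Corollary~\ref{cor:path-to-permutation} uses a two-sided scan --- left-to-right matching each $D$ with an unmatched $U$ ($h_i$ choices, height before the $D$) and right-to-left matching each $U$ with an unmatched $D$ ($h_i$ choices, height after the $U$), plus $2h_i+1$ options per $H$ --- which realizes the factors of Equation~\eqref{eq:perm-product-introduction} directly, step by step. That is precisely the ``Foata--Zeilberger-style history'' you mention as an alternative in your last sentence; adopting it would let you skip the redistribution identity $\prod_{U}h_{\mathrm{after}}=\prod_{D}h_{\mathrm{before}}$, though that identity is also correct as stated (for each level $\ell$, up-steps into $\ell$ and down-steps out of $\ell$ are equinumerous on any path from height $0$ back to height $0$). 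In short: correct, self-contained, and in the same spirit as the original reference; the one-sided sweep with $h^2$ at each $D$ versus the paper's symmetric two-scan bookkeeping is the only structural divergence, and each buys a small convenience --- yours localizes all choices in a single pass, the paper's makes the weight $\omega(\mz)$ appear factor-by-factor without any product-rebalancing lemma.
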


\begin{restatable}[Appendix~\ref{app:omitted-proofs}]{corollary}{pathtopermcorollary}
\label{cor:path-to-permutation}
Given a Motzkin path $\mz$ of length $n$, we can sample uniformly at random one of the $\omega(\mz)$ many permutations mapping into $\mz$ in time $\bigO(n)$.
\end{restatable}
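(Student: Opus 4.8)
The starting point is the observation, highlighted in the text above, that the weight $\omega(\mz)=\prod_{i=1}^n\omega_i$ factorises over the steps. The plan is to turn this product into an explicit bijection between the fiber of permutations mapping to $\mz$ and a set of \emph{labelings} of the steps, where step $i$ carries an independent label ranging over a set whose size I control. Sampling a permutation uniformly then reduces to drawing each label uniformly and independently and decoding the result, so the work is entirely in the decoding bijection and its running time.

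For the decoding I would read each position $i$ through two arcs: the value $\pi(i)$ placed at position $i$, and the position $\pi^{-1}(i)$ holding the value $i$. Scanning positions left to right and inspecting the boundary after position $i$, call an arc \emph{right-going} if it exports a value to a larger target (a position $s$ with $\pi(s)>s$ still awaiting the arrival of its value) and \emph{left-going} if it imports a value from the right (a value $v$ with $\pi^{-1}(v)>v$ still awaiting its source position). Because $\pi$ is a bijection, at every boundary the number of open right-going arcs equals the number of open left-going arcs, and this common count is exactly the height $h_i$ of $\mz$. A $U$-step opens one arc of each kind (no choice); a $D$-step closes one of each, choosing which open right-going arc receives the arriving value $i$ and which open left-going arc is filled by the value $\pi(i)<i$; and an $H$-step is one of three sub-cases: close-and-open a right-going arc, close-and-open a left-going arc, or a fixed point.

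This yields the intended labeling, which I would decode while maintaining two arrays listing the currently open right-going and left-going arcs, each of length equal to the current height. At a $U$-step I append a new arc to each array. At a $D$-step the label is a pair $(k,\ell)\in\{1,\dots,h_i\}^2$ selecting the arcs to close; closing an arc commits one entry of $\pi$ and removes it from its array by swapping it with the last element. At an $H$-step the label ranges over the $2h_i+1$ options (a choice in $\{1,\dots,h_i\}$ for the right-going sub-case, a choice in $\{1,\dots,h_i\}$ for the left-going sub-case, or the single fixed-point option). Every operation — append, select the $k$-th element, delete by swap — costs $\bigO(1)$, and one preliminary left-to-right pass computes all heights, so the whole decoding runs in $\bigO(n)$. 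The number of labelings is $\prod_{\mz_i=D}h_i^2\cdot\prod_{\mz_i=H}(2h_i+1)$, which equals $\omega(\mz)$ because in any Motzkin path the heights attained by up-steps and those vacated by down-steps form the same multiset, so $\prod_{\mz_i=U}h_i\cdot\prod_{\mz_i=D}h_i=\prod_{\mz_i=D}h_i^2$.

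Granting that the decoding is a bijection onto the fiber, the sampler is immediate: draw each label uniformly and independently to obtain a uniform element of a set of size $\omega(\mz)$, then decode it in $\bigO(n)$; the resulting $\pi$ is uniform over the $\omega(\mz)$ permutations mapping to $\mz$. The main obstacle is precisely establishing this bijection, where two points need care. First, the decoding never gets \emph{stuck}: a $D$-step or a closing $H$-sub-case requires at least one open arc of the relevant kind, which is guaranteed because the common count equals $h_i\ge 1$ there, as $\mz$ is a genuine Motzkin path. Second, the map is \emph{injective and surjective}: a given $\pi$ determines its arc structure and the order in which arcs close, hence a unique labeling, while distinct labelings produce distinct matchings and thus distinct permutations; together with the matching count $\omega(\mz)$ on both sides this forces a bijection. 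The remaining, routine but essential, check is that the $U/D$ openings and closings and the three $H$-sub-cases reproduce exactly the step-type assignment underlying~\eqref{eq:perm-product-introduction}.
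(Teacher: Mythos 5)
Your proposal is correct and matches the paper's own proof (Appendix~\ref{app:omitted-proofs}) in all essentials: both encode the fiber over $\mz$ by a matching of $D$s to open $U$s, a second matching in the opposite direction, and a $(2h_i+1)$-way choice at each $H$ (break one of the $h_i$ crossing edges of either kind, or take a fixed point), realized by independent uniform local choices so that each of the $\omega(\mz)$ outcomes is equally likely and the whole procedure runs in $\bigO(n)$. The only divergence is bookkeeping: the paper makes the second matching's choices in a separate right-to-left scan with $h_i$ options at each $U$, whereas you make both choices at each $D$ in a single left-to-right pass, which is why you need the (correct, and correctly justified via level crossings) identity that the height multisets of $U$- and $D$-steps coincide, giving $\prod_{\mz_i=U} h_i \cdot \prod_{\mz_i=D} h_i = \prod_{\mz_i=D} h_i^2$.
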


\paragraph{Our contribution.}
In this work, we address counting and sampling of permutations from both a combinatorial and computational point of view. Specifically:
\begin{itemize}
	\item On the computational side, we show that the total
          displacement number $D(n,d)$ can be computed efficiently,
          namely, in time $\bigO(n^4)$ and $\bigO(n^3)$ space.
	\item On the combinatorial side, we introduce sequences of
          certain \emph{building blocks} which provide a different
          perspective on the problem structure. Moreover, this is a
          crucial part of a Markov chain sampling method which constitutes the
          third contribution of this paper.
	\item Finally, we consider the task of \emph{sampling}
          permutations of a given total displacement with uniform
          distribution.
\end{itemize}

To compute the number of permutations efficiently, we look at the
paths from left to right. Building on an operation introduced by
Barcucci et al. \cite{barcucci1995construction}, we can provide an
elegant dynamic programming formulation which achieves a running time
of $\bigO(n^4)$ and needs space $\bigO(n^3)$. Consequently, we can
compute the sequences A062869 \cite{oesis_triangle} and A129181
\cite{oesis_motzkin_area} to much higher values of $n$ and $d$ than was
possible before.

Considering the combinatorial aspects, we show that every Motzkin path
comes from a sequence $a$ describing its building blocks. We provide an
explicit formula for the number $m(a)$ of paths that these building
blocks can form. The weights in Equation~\eqref{eq:perm-product-introduction}
are preserved in the sense that the weight of a path depends only on
its building sequence.

Since the exact formula seems to be currently out of reach, to
achieve good estimates of $D(n,d)$ for very large $n$ and $d$, we
contribute sampling methods which can also be of independent interest.
In particular, the dynamic programming algorithm provides a sampler
with the same complexity as the algorithm itself.  Further, we show
that sampling sequences of building blocks with appropriate
distribution automatically gives a sampler for the permutations. One
application of the latter result is a Monte Carlo Markov chain (MCMC)
method which gives an alternative approach to the dynamic
programming. The computational experiments with the MCMC method show a
promising convergence speed leading to a sampler with very high values
of $n$ and $d$. The experimental results support a hypothesis that the
MCMC method is faster than the method based on dynamic programming and
runs in $\bigO(n^3)$ time.

\paragraph{Related Work.}
Different metrics on permutations have been studied, for a
survey see \cite{Deza98metricsonpermutations}.  Sampling and counting
of permutations of a fixed distance was studied for several
metrics~\cite{samplingpermutations} but not for total
displacement.

The number of Motzkin paths under various conditions were also studied in
a more general frame of enumeration of lattice paths
\cite{humphreys2010history,goulden2004combinatorial}. Motzkin
numbers play a role in many combinatorial problems as is illustrated
for example in \cite{donaghey1977motzkin}.
The total area under a set of generalized Motzkin paths, where the
horizontal segments have a constant length $k$ ($k\geq 0$) have been
studied in \cite{pergola2002bijective} and
\cite{merlini2003generating}. Moreover, the author in
\cite{sulanke2000moments} studies the moments of generalized Motzkin
paths where the first moment describes the area under a Motzkin
path.
Heinz~\cite{oesis_motzkin_area} describes a different
algorithm for enumerating unweighted Motzkin paths with a given area,
cf.~Remark~\ref{rem:heinz} in Section~\ref{sec:DP-counting}.

The use of Markov chains for sampling and counting combinatorial objects is 
a very active research area (see e.g. the book \cite{Bub11}), and some 
works exploit the connection between combinatorial structures and paths
of a certain type to accomplish this task (see e.g. \cite{GrePasRan09}).

\paragraph{Paper Organization.} Section~\ref{sec:DP}
describes the dynamic programming
algorithm. Section~\ref{sec:blocks_decomposition} describes how
weighted Motzkin paths can be counted via building block sequences.
Section~\ref{sec:sampling} provides a Markov chain sampling algorithm as well as its experimental evaluation.


\section{Weighted Motzkin Paths using Dynamic Programming}
\label{sec:DP}

Recall that we denote by $D(n,d)$ the number of permutations on $n$ elements with total displacement $2d$ (OEIS A062869 \cite{oesis_triangle}).
Let $M(n,A)$ denote the number of Motzkin paths of width $n$ and area $A$ (OEIS A129181 \cite{oesis_motzkin_area}).

\subsection{Dynamic Program for Counting Weighted Motzkin Paths}
\label{sec:DP-counting}
\begin{theorem}
Computing $M(n,A)$ and $D(n,d)$ can be done in time $\bigO(n^4)$ and space $\bigO(n^3)$.
\label{thm:last_fall_dp}
\end{theorem}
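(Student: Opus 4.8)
The plan is to set up a left-to-right dynamic program over the moves of a Motzkin path, carrying as state the current height and the enclosed area, and — for the weighted count — accumulating the product of the move-weights $\omega_i$. Write $f(i,h,a)$ for the number of length-$i$ Motzkin prefixes that end at height $h$ and enclose area $a$ with the $y=0$ axis, and $g(i,h,a)$ for the analogous sum in which each prefix is counted with weight $\prod_{j\le i}\omega_j$. Since the enclosed area is a sum of per-column trapezoids with half-integer values, I would track \emph{twice} the area so that every increment is integral; a step ending at height $h$ then contributes $2h-1$ (a $U$ from $h-1$), $2h$ (an $H$ at $h$), or $2h+1$ (a $D$ from $h+1$) to this doubled area. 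With the weights of \eqref{eq:perm-product-introduction} — namely $h$ for such a $U$, $2h+1$ for such an $H$, and $h+1$ for such a $D$, under the stated ``maximum height during the move'' convention — the recurrence reads
\[
g(i,h,a)=h\,g(i-1,h-1,a-2h+1)+(2h+1)\,g(i-1,h,a-2h)+(h+1)\,g(i-1,h+1,a-2h-1),
\]
with $g(0,0,0)=1$, and $f$ obeys the same recurrence after deleting the multiplicative factors. The target quantities are then read off as $M(n,A)=f(n,0,2A)$ and, by Theorem~\ref{th:displacement_equals_weighted_paths}, $D(n,d)=g(n,0,2d)$.

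For the complexity, the index $i$ ranges over $0,\dots,n$, the height $h$ over $0,\dots,\lfloor n/2\rfloor$, and the doubled area $a$ over $0,\dots,\bigO(n^2)$, since a path of width $n$ encloses area $\bigO(n^2)$. Evaluating each table entry takes $\bigO(1)$ time, so filling the whole table costs $\bigO(n^4)$ time and, naively, $\bigO(n^4)$ space. To reach the claimed $\bigO(n^3)$ space I would process the moves in increasing order of $i$ and keep only the current and previous $(h,a)$-slices; each slice has $\bigO(n)\cdot\bigO(n^2)=\bigO(n^3)$ entries, and the $\bigO(n)$ sweeps over $\bigO(n^3)$ entries preserve the $\bigO(n^4)$ time bound. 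Running the unweighted recurrence for $f$ and the weighted one for $g$ in parallel yields both $M(n,A)$ and $D(n,d)$ within the same budget.

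The routine but error-prone part is fixing the per-move area and weight increments exactly: one must reconcile the trapezoidal area with the integer bookkeeping and apply the height rule consistently (height after a $U$, before a $D$, current for an $H$), which I would verify on the width-$10$ example of Figure~\ref{fig:Motzkin-and-permutation}. The genuine content of the theorem is the space bound, resting on two observations: that only two $i$-layers of the table are ever needed, and that the area coordinate spans only $\bigO(n^2)$ values; together these drop the space from $\bigO(n^4)$ to $\bigO(n^3)$. The construction of Barcucci et al.~\cite{barcucci1995construction} gives an alternative recursive way to generate each path exactly once by extending its last descent, which can replace the explicit column-by-column recurrence above while leading to the same state space and hence the same bounds.
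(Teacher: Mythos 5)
Your proposal is correct, but it proves the theorem by a genuinely different route than the paper. You set up the direct column-by-column (transfer-matrix) dynamic program over moves, with state (position, current height, doubled area) and $\bigO(1)$ transitions; your area increments ($2h-1$, $2h$, $2h+1$ in doubled units) and weight factors ($h$ for a $U$ ending at $h$, $2h+1$ for an $H$ at $h$, $h+1$ for a $D$ from $h+1$) are all consistent with the conventions around Equation~\eqref{eq:perm-product-introduction}, and the correctness of the weighted recurrence follows from the fact that each factor $\omega_i$ depends only on the move and the heights recorded in the state. The paper instead builds on the insertion construction of Barcucci et al.: its state is $(n, A, l)$ where $l$ is the \emph{last fall length}, the transitions sum over all predecessor fall lengths $l'\geq l$ (resp.\ $l'\geq l-1$), and it needs a prefix-sum table $\mathit{SM}$ to bring the per-state cost down to amortized constant time, whereas your transitions are natively $\bigO(1)$ with no such trick. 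What your approach buys is elementarity — no reliance on the completeness and uniqueness of the last-fall insertion scheme, and a recurrence one can verify locally; what the paper's approach buys is continuity with the rest of the article: the last-fall construction foreshadows the building-block decomposition of Section~\ref{sec:blocks_decomposition}, and the sampler of Theorem~\ref{thm:last_fall_sampling} retraces exactly that DP (though an analogous $\bigO(n)$ retracing sampler would work on your table too). One small correction to your closing remark: the Barcucci-style DP does not lead to ``the same state space'' — it replaces the height coordinate by the last-fall length $l$ — it merely has the same asymptotic size $\bigO(n^4)$ and admits the same two-layer space reduction to $\bigO(n^3)$.
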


\begin{proof}

The key ingredient is a construction by Barcucci et al.~\cite{barcucci1995construction} that produces every possible Motzkin path through a unique sequence of insertion steps. 

Let us look at the last fall of a given Motzkin path, i.e., its suffix of Down-right moves.
At one of the positions before or after any of these fall moves, we insert a new \emph{peak} (a $U$ and a $D$) or we insert a new \emph{flat} (an $H$). 
Repeatedly inserting peaks and flats this way along the last fall will create our path. See Figure~\ref{fig:last-fall-step} 
for an example. 
\begin{figure}[b]
	\centering
	\includegraphics[width=\textwidth]{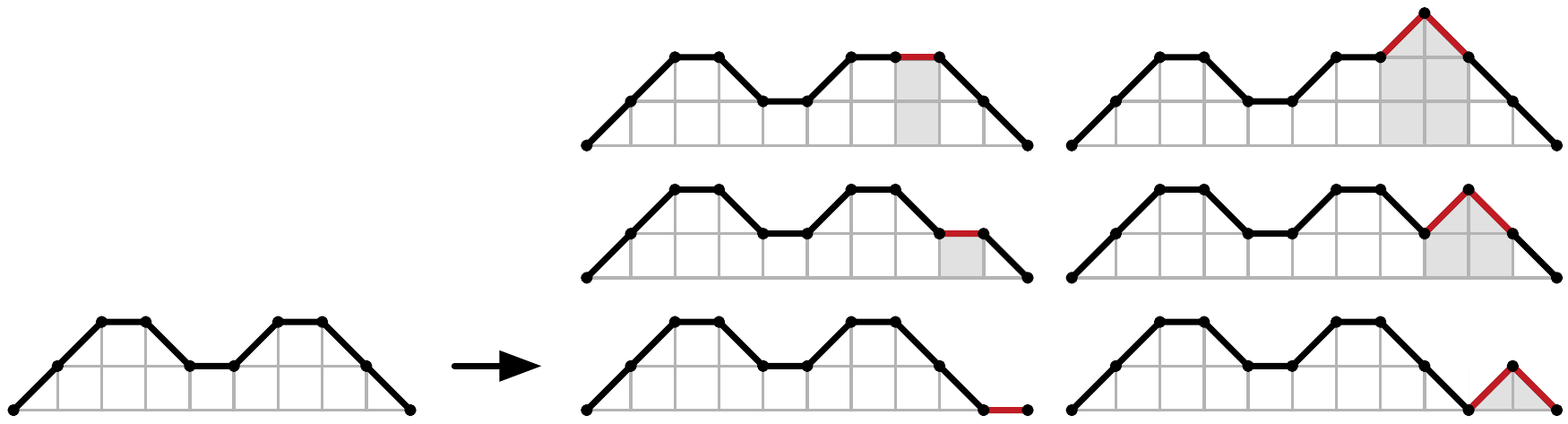}
	\caption{All six possible flat- and peak-extensions of the last fall of length $2$.}
	\label{fig:last-fall-step}
\end{figure}
This construction is complete and unique~\cite{barcucci1995construction}, meaning that every Motzkin path can be created through a unique sequence of such insertions.

This allows us to derive a dynamic programming formulation for counting $M(n,A)$.
We add the last fall length $l$ to our state and write it as $M(n,A,l)$.
So how can we recursively express $M(n,A,l)$? We undo the last insertion step. If we inserted a flat last, then we were at $M(n-1,A-l,l')$ before the insertion, for some $l'\geq l$, because the last fall was at least as long before the insert. When inserting a peak, we might increase the last fall length by one, but not more. So $M(n-2,A-(2l-1),l')$ for all $l' \geq l-1$ are also possible predecessor states. Together with the base case $M(0,0,0)=1$ this gives the recurrence
\begin{linenomath}
\begin{align}
M(n,A,l) = 
&\sum_{l' \geq l}^{{n}/{2}} M(n-1,A-l,l') +
\sum_{l' \geq l-1}^{{n}/{2}} M(n-2,A-(2l-1),l'),
\label{eq:Motzkin-area-recurrence}
\end{align}
\end{linenomath}
which allows for $\bigO(n^4)$ many states as $A \leq n^2$ and $l \leq n$. Hence we immediately get an $\bigO(n^5)$ time algorithm with $\bigO(n^4)$ space. We can shave off one factor of $n$ in both time and space as follows:
We first note, that we can compute the two sums in constant time if we precompute the prefix sums over the last variable $l'$.
Let us denote these prefix sums as $\mathit{SM}(n,A,l) = \sum_{l'=0}^l M(n,A,l') = \mathit{SM}(n,A,l-1) + M(n,A,l)$.
This allows us to compute every value of $M(n,A,l)$ in amortized constant time, so in time $\bigO(n^4)$ overall.
Finally, our recurrence only relies on the last two values of $n$, so when computing $M(n,\cdot, \cdot)$ only the $\bigO(n^3)$ many values for $M(n-1,\cdot,\cdot)$ and $M(n-2,\cdot,\cdot)$ need to be stored.
The values $M(n,A)$ are then simply the marginals of $M(n,A,l)$ over all last fall lengths $l$.

We can extend this recurrence to the weighted case which by Corollary~\ref{cor:path-to-permutation} gives rise to the total displacement count:
We distribute the factors of the weight $\omega(\mz)$ (Equation \eqref{eq:perm-product-introduction}) over the steps of the dynamic program. As $l$ denotes the height of the last flat or peak that we add, we have factors $2l+1$ or $l^2$: 
\begin{linenomath}
\begin{align*}
\ \,
D(n,d,l) = 
&(2l+1)\sum_{l' \geq l}^{{n}/{2}} D(n-1,d-l,l')+
l^2\sum_{l' \geq l-1}^{{n}/{2}} D(n-2,d-(2l-1),l'). \qedhere
\ \,
\end{align*}
\end{linenomath}
\end{proof}

\begin{remark}
The bounds in Theorem~\ref{thm:last_fall_dp} assumed that basic operations have unit-cost.
The numbers involved can be exponential in $n$ however.
We can easily bound $M(n,A) \leq 3^n$ and $D(n,d) \leq n!$ showing that their bit-representations are at most of length $\bigO(n\log n)$.
Our dynamic programs only use multiplication with small numbers of size $\bigO(\log n)$ and addition.
So one can consider a refined analysis by multiplying both the time and space bounds of Theorem~\ref{thm:last_fall_dp} by $\Theta(n \polylog n)$. Finally, as suggested by an anonymous reviewer, the space could be further improved by counting modulo small primes and using the Chinese Reminder Theorem.
\end{remark}

\begin{remark}
\label{rem:heinz}
For computing $M(n,A)$, the OEIS contains a dynamic program by Heinz~\cite{oesis_motzkin_area}.
It is stated as a \emph{Maple} code snippet without any further comment or reference.
It uses a different state and might have the same time complexity as ours.
We believe that our extension to the weights of $D(n,d)$ can also be applied.
\end{remark}

\subsection{Sampling from the Dynamic Program}
\label{sec:DP-sampling}

\begin{theorem}
After running the dynamic program from Theorem~\ref{thm:last_fall_dp}, we can sample (weighted) Motzkin paths in time $\bigO(n)$.
\label{thm:last_fall_sampling}
\end{theorem}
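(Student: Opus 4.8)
The plan is to turn the filled tables of Theorem~\ref{thm:last_fall_dp} into a sampler by a standard stochastic backtrace, in which at each step the probability of undoing a particular insertion is made proportional to the count stored in the corresponding predecessor cell. Because the construction of Barcucci et al.\ is both complete and unique, every path $\mz$ corresponds to exactly one chain of predecessor cells, leading from its final state down to the base case $M(0,0,0)=1$; the product of the transition probabilities along this chain then telescopes to $\omega(\mz)/D(n,d)$ in the weighted case (respectively $1/M(n,A)$ unweighted). This is precisely the target distribution, so composing the path sampler with Corollary~\ref{cor:path-to-permutation} would immediately yield uniform sampling of permutations of total displacement $2d$.

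Concretely, I would first draw the last-fall length $l^*$ of the target path with probability proportional to $D(n,d,l)$ over all $l$, using a single uniform variate together with the precomputed prefix sums $\mathit{SM}$. Starting from the state $(n,d,l^*)$, I then repeatedly undo one insertion: I first decide between a flat and a peak predecessor by comparing a uniform draw against the two partial sums appearing in recurrence~\eqref{eq:Motzkin-area-recurrence}, each a constant-time difference of prefix sums, and, conditioned on that choice, I sample the previous last-fall length $l'$ (ranging over $l'\ge l$ or $l'\ge l-1$) proportionally to the stored counts, again through the prefix sums. Each step reduces the width by one (flat) or two (peak), so after $\bigO(n)$ steps I reach the base state, and the recorded transitions — each of which pins down both the type of the inserted move and its position along the current last fall, as these are fixed by the recurrence branch together with the pair $(l,l')$ — form the unique Barcucci insertion sequence that builds $\mz$.

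For the running time, the key observation is that there are only $\bigO(n)$ backtrace steps, since every step consumes at least one unit of width. The prefix sums carried over from Theorem~\ref{thm:last_fall_dp} let me evaluate each partial sum and invert each one-dimensional conditional distribution from a single random draw, so that every step costs constant time; and replaying the recorded insertions into the actual path is done with a doubly linked list, so each of the $\bigO(n)$ insertions, whose position is already known, also takes constant time. Summing over all steps gives the claimed $\bigO(n)$ bound.

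I expect the main obstacle to lie exactly in this last-fall-length sampling: making each individual transition genuinely constant-time — rather than incurring a logarithmic search to locate $l'$ within its range — is what forces reliance on the precomputed prefix sums, and I would have to argue that the width and area indices of the predecessor cells are already determined by the current state, so that only the scalar distribution over $l'$ must be inverted. A secondary point requiring care is verifying that the per-path telescoping product really equals $\omega(\mz)/D(n,d)$; this hinges on the \emph{uniqueness} half of the Barcucci construction, for without it a path could be reachable through several distinct backtraces and the sampling probabilities would be distorted.
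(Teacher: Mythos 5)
Your sampler is the same stochastic backtrace over the DP table that the paper uses, and the distributional part of your argument is sound: uniqueness of the Barcucci insertion sequence means each path corresponds to exactly one chain of predecessor cells, so the transition probabilities telescope to $\omega(\mz)/D(n,d)$ (resp.\ $1/M(n,A)$). The genuine gap is in the running-time analysis, precisely at the point you flagged and then waved away: the prefix sums $\mathit{SM}$ let you \emph{evaluate} any partial sum such as $\sum_{l'\ge l} M(n-1,A-l,l')$ in constant time, but they do not let you \emph{invert} the conditional distribution over $l'$ in constant time. Locating the smallest $l'$ whose cumulative weight exceeds the current offset is still a search over a monotone sequence; a binary search costs $\Theta(\log n)$ per step and yields only $\bigO(n\log n)$ overall, and no per-step $\bigO(1)$ bound follows from the tables alone. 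So your claim that ``every step costs constant time'' is unjustified, and without a replacement argument the stated $\bigO(n)$ bound does not follow.

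The paper closes exactly this hole with an amortized argument rather than a per-step one: it finds $l'$ by a plain linear scan, incrementing $l'$ and comparing the offset $x$ against one table entry at a time, and observes that every such comparison fixes at least one move of the sampled path. Since the output path has only $n$ moves, the total number of comparisons across the entire backtrace is $\bigO(n)$, even though an individual step may be expensive. (Equivalently: traced backwards, the last-fall length can decrease by at most one per peak-undo step, so over at most $n$ steps its total backward increase --- which is what the scans pay for --- is $\bigO(n)$.) Only the very first draw of the last-fall length is charged a full linear search, which the paper does explicitly. If you replace your constant-time-inversion claim by this charging scheme, your proof matches the paper's; as written, the time bound is the one missing piece.
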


\begin{proof}
Given access to a source of randomness and the filled table for $M$, we can randomly retrace the steps through the dynamic programming states to sample a Motzkin path from right to left.
For the weighted paths according to $D(n,d)$ all the steps will be exactly the same.
We first sample the last fall length by picking a random number $x \in_{\text{u.a.r.}} \{0, \dots, M(n,A)-1\}$ and then finding the smallest $l$ such that its prefix sum $\mathit{SM}(n,A,l)$ is larger than $x$.
We continue with $x-\mathit{SM}(n,A,l-1)$, the offset within the class of paths with last fall length $l$.
For each step, we first decide whether we are in the flat-case or in the peak-case of the recurrence by comparing $x$ to the left summand of~\eqref{eq:Motzkin-area-recurrence}.
We then know whether the move before the last fall was an $H$ or a $U$. We increment $l'$ until we find the last fall length of the previous state.
We adapt $x$ and recurse until we end at $M(0,0,0)$ with $x=0$.
Note that the search for the initial $l$ takes linear time. After that, every time we compare $x$ to a value of $M$, we fix at least one move of the sampled Motzkin path, so sampling takes $\bigO(n)$ time overall.
\end{proof}

\begin{remark}
This sampling procedure requires the full table of the dynamic program to be stored. Hence the memory optimization from $\bigO(n^4)$ to $\bigO(n^3)$ in Theorem~\ref{thm:last_fall_dp} can not be used simultaneously.
\end{remark}

\begin{remark}
A C++ implementation of our counting and sampling approaches by Theorems~\ref{thm:last_fall_dp} and~\ref{thm:last_fall_sampling} is available at \REPOURL. With our code, we can quickly compute for $n$ up to $100$ (and all $d$) the integer sequences A062869 \cite{oesis_triangle} and A129181 \cite{oesis_motzkin_area} which were only known up to $n\leq 30$ and $n\leq 50$ before.
\end{remark}

\section{Combinatorial Structure of Motzkin Paths}\label{sec:blocks_decomposition}

In this section, we look at the combinatorial structure of Motzkin paths:
There is a natural decomposition of any Motzkin path into ``building blocks'', already hinted at in the last section. 
For each height $i$ of the Motzkin path we count the number of flats $f_i$ and peaks $p_i$.

\begin{definition}[building sequence]\label{def:feasible_sequence}
	For given positive integers $n$ and $A$, 
	a finite sequence of non-negative integers $a=(f_0,p_1,f_1,p_2,\ldots,p_h,f_h)$ is a \emph{building sequence} if all $p$-entries are \emph{non-zero},
	$
	p_1,p_2,\ldots,p_h >0,
	$
	and the following two conditions hold:
	\begin{eqnarray}
	(f_0 + f_1 + \ldots + f_h) + 2(p_1 + p_2 + \ldots + p_h)		= n,	\label{eq:widthI} \\		
	(0f_0 + 1 f_1 + \ldots + h f_h) + (1p_1 + 3p_2 + \ldots + (2h-1)p_h)	= A.	\label{eq:areaI}
	\end{eqnarray}
	The set of all building sequences satisfying \eqref{eq:widthI}-\eqref{eq:areaI} is denoted as $S(n,A)$.
\end{definition}

\begin{figure}[t!]
	\centering
	\includegraphics{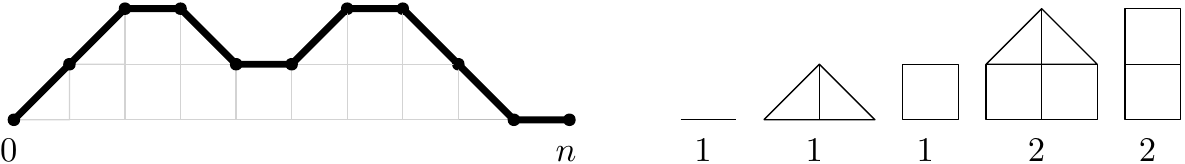}
	\caption{The Motzkin path on the left can be obtained from its building blocks.}
	\label{fig:building_blocks}
\end{figure}

Such a sequence has a natural interpretation as a set of ``building blocks'' that generate a number of Motzkin paths of width $n$ and area $A$ (see Figure~\ref{fig:building_blocks}):
We have $f_i$ flats and $p_i$ peaks of height $i$ which can be split into pieces of width~$1$ and then rearranged into a Motzkin path.  

\begin{proposition}\label{prop:unique_sequence}
	For any Motzkin path $\mz$ of width $n$ and area $A$ there exists a unique building sequence $a^{(\mz)}\in S(n,A)$ 
	such that $\mz$ can be obtained by splitting and rearranging the blocks of this sequence.
\end{proposition}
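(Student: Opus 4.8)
The plan is to define the candidate sequence $a^{(\mz)}$ by reading the relevant counts directly off $\mz$, to verify that it lands in $S(n,A)$, and then to argue uniqueness by exhibiting the multiset of width-$1$ pieces of $\mz$ as an invariant that determines the sequence. Concretely, let $h$ be the maximum height reached by $\mz$. For $0 \le i \le h$ set $f_i$ to be the number of $H$-moves of $\mz$ lying at height $i$, and for $1 \le i \le h$ set $p_i$ to be the number of $U$-moves of $\mz$ that ascend from height $i-1$ to height $i$. I claim that $a^{(\mz)} := (f_0, p_1, f_1, \dots, p_h, f_h)$ is the desired building sequence.

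The structural fact that makes the peak-count $p_i$ meaningful is a level-crossing balance: for every $i \ge 1$, the number of $U$-moves ascending from $i-1$ to $i$ equals the number of $D$-moves descending from $i$ to $i-1$. This is the standard argument of counting signed crossings of the horizontal line $y = i - \tfrac{1}{2}$: since $\mz$ starts and ends on $y=0$, which lies below this line, its upward and downward crossings must match. Hence each level $i$ contributes exactly $p_i$ unit $U$-pieces and $p_i$ unit $D$-pieces, which pair up into $p_i$ peaks of height $i$; and because $\mz$ attains height $h$, every level $1 \le i \le h$ is crossed, so all $p_i > 0$, as required by Definition~\ref{def:feasible_sequence}.

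Next I verify the two constraints. Each flat is one move and each peak is two moves, so the total move count is $\sum_i f_i + 2\sum_i p_i$, which equals the width $n$ of $\mz$; this is \eqref{eq:widthI}. For the area, I sum the trapezoidal area under each unit piece: an $H$-piece at height $i$ contributes $i$, while a peak of height $i$ consists of a $U$-piece from $i-1$ to $i$ and a $D$-piece from $i$ to $i-1$, contributing $\tfrac{2i-1}{2} + \tfrac{2i-1}{2} = 2i-1$. Summing yields $\sum_i i f_i + \sum_i (2i-1) p_i = A$, which is \eqref{eq:areaI}. Thus $a^{(\mz)} \in S(n,A)$, and since reassembling the blocks in their original positions trivially reproduces $\mz$, the existence part is complete.

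For uniqueness, I observe that splitting $\mz$ into unit-width pieces yields a multiset $P(\mz)$ recording, for each height $i$, the number of $H$-pieces at height $i$ and, for each level $i$, the numbers of $U$- and $D$-pieces. By the crossing balance these last two numbers coincide at every level, so $P(\mz)$ is in bijection with the count data $(f_\bullet, p_\bullet)$, that is, with $a^{(\mz)}$. Now if any $b \in S(n,A)$ also produces $\mz$ by splitting and rearranging, then $\mz$ is a reordering of the unit pieces of $b$, so $P(\mz)$ equals the piece-multiset of $b$; reading off the counts then forces $b = a^{(\mz)}$. The only genuinely non-trivial ingredient is the crossing-balance step, which both guarantees that the peak-blocks are well defined and ensures that the invariant $P(\mz)$ corresponds to an honest building sequence rather than an unbalanced collection of $U$- and $D$-pieces; everything else reduces to the two elementary summations above.
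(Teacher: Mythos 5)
Your proof is correct, and it follows exactly the route the paper intends: the paper states Proposition~\ref{prop:unique_sequence} without an explicit proof, treating it as immediate from reading off the counts $f_i$ of flats and $p_i$ of peaks at each height, which is precisely what you formalize (your crossing-balance argument for the line $y = i - \tfrac{1}{2}$ is the right justification that the $U$/$D$ counts pair into well-defined peaks with $p_i > 0$, and your trapezoid sums verify \eqref{eq:widthI}--\eqref{eq:areaI}). Nothing to add; your write-up supplies the details the paper leaves implicit.
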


Theorem~\ref{th:displacement_equals_weighted_paths} gives a surjective mapping from permutations into Motzkin paths.
It is easy to see that the number of permutations $\omega(\mz)$ mapping into the same path $\mz$, given by Equation~\eqref{eq:perm-product-introduction}, is uniquely determined by the building block sequence $a = a^{(\mz)}$, since we have 
\begin{eqnarray}
	\perm(a) := \prod_{f_i}{(2i+1)^{f_i}} \prod_{p_i}{i^{2p_i}} = \omega(\mz).	\label{eq:perm-product}
\end{eqnarray}

Hence $\omega(\mz)$ is independent of the actual Motzkin path and only depends on its combinatorial structure.
This raises the question of whether also the number of Motzkin paths which share a common building sequence $a$ is solely determined by $a$.
We answer this in the positive, deriving a formula for this number, denoted by $m(a)$. 
We proceed in a top-down fashion by looking at the number of peaks and flats in the highest level and how these can be rearranged. 
Once a level is fixed, we proceed recursively by arranging the blocks one level below.

\begin{theorem}\label{th:blocks_rearrange}
	For any building sequence $a= 
	(f_0, p_1, f_1, \ldots, p_h, f_h) \in S(n,A)$, 
	the number of Motzkin paths of width $n$ and area $A$ that can be constructed out of the building sequence $a$ is exactly
	\begin{linenomath}
	\begin{align}	
			m(a) 	&=		{f_h + p_h -1 \choose p_h -1} {p_{h} + f_{h-1} \choose f_{h-1}} 
			{p_{h} + f_{h-1} + p_{h-1} - 1 \choose p_{h-1}-1} {p_{h-1} + f_{h-2} \choose f_{h-2}} \cdots 	\nonumber\\
			&\quad\ \cdots 
			{p_{3} + f_{2} + p_{2} - 1 \choose p_{2}-1}  {p_2 + f_1 \choose f_1}
			{p_{2} + f_{1} + p_{1} - 1 \choose p_{1}-1}  {p_1 + f_0 \choose f_0}_.	\label{eq:ma-product}
	\end{align}
	\end{linenomath}
\end{theorem}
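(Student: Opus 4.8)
The plan is to construct every Motzkin path with building sequence $a$ by a single top-down sweep over the heights $h, h-1, \ldots, 0$, and to argue that this sweep is a bijection between such paths and tuples of independent combinatorial choices, one pair of choices per level. The starting observation is the nesting interpretation forced by the width and area equations \eqref{eq:widthI}--\eqref{eq:areaI}: each peak at height $i$ is a matched pair consisting of an up-step from $i-1$ to $i$ and a down-step from $i$ to $i-1$, bracketing a contiguous part of the path that stays at height $\ge i$, while each flat at height $i$ is a single horizontal step that must occur while the path is at height exactly $i$. Consequently, whenever $i \ge 1$, every flat at height $i$ and every sub-path that reaches height $\ge i+1$ (i.e.\ every ``block'' already assembled from the levels above) must lie \emph{inside} one of the $p_i$ peak-$i$ brackets. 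This is precisely where the hypothesis $p_1, \ldots, p_h > 0$ enters: there is always at least one bracket available to receive the material coming from above, so the construction never gets stuck.

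First I would make the sweep precise. Processing heights from the top down, I maintain an ordered list of blocks (sub-paths that ascend from a common base height and return to it). At the top height $h$ there are no incoming blocks, so the only freedom is to distribute the $f_h$ flats among the $p_h$ peak-$h$ brackets; by stars and bars this gives $\binom{f_h+p_h-1}{p_h-1}$ choices and produces an ordered list of $p_h$ blocks. At each subsequent height $i$ (with $p_{i+1}$ blocks inherited from above) there are two independent choices: a flat-interleaving step, which merges the $f_i$ flats at height $i$ into the ordered list of $p_{i+1}$ blocks in $\binom{p_{i+1}+f_i}{f_i}$ ways; followed by a bracketing step, which splits the resulting ordered list of $p_{i+1}+f_i$ items into the $p_i$ ordered peak-$i$ brackets in $\binom{p_{i+1}+f_i+p_i-1}{p_i-1}$ ways, yielding a new ordered list of $p_i$ blocks. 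At the ground level $i=0$ only a final flat-interleaving of the $f_0$ flats with the $p_1$ blocks remains, contributing $\binom{p_1+f_0}{f_0}$, and the result is a complete Motzkin path. Multiplying the per-level factors reproduces exactly the alternating product in \eqref{eq:ma-product}.

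The heart of the argument is to show that this sweep is a bijection onto the set of Motzkin paths with building sequence $a$, which I would prove by induction on $h$, using a slightly generalized statement that also tracks an incoming ordered list of blocks (the base case $h=0$ being the single all-flats path). Injectivity and surjectivity both follow from the fact that the decomposition is \emph{canonical}: in any fixed path the apex height of every up- and down-step is determined, matching up- and down-steps recovers the peak brackets at every level, and the contents of each bracket (its flats at that height together with the nested higher blocks, read left to right) recover exactly the interleaving and bracketing choices made by the sweep. I expect the main obstacle to be handling the orderings cleanly: one must verify that treating the inherited list of blocks as an ordered sequence --- fixed by the level above and merely partitioned, never re-permuted, by the lower levels --- neither double-counts nor omits any path. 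The stars-and-bars count for the bracketing step is valid precisely because $p_i \ge 1$, and the induction hypothesis is what guarantees that the blocks handed down from height $i+1$ already appear in the unique order in which they occur in the final path; establishing this order-preservation invariant carefully is the crux of the proof.
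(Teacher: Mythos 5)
Your proposal is correct and takes essentially the same route as the paper's proof: a top-down, level-by-level construction in which at each height $i$ one first interleaves the $f_i$ flats with the $p_{i+1}$ blocks inherited from above ($\binom{p_{i+1}+f_i}{f_i}$ ways) and then splits the resulting ordered list into $p_i$ groups by stars and bars ($\binom{p_{i+1}+f_i+p_i-1}{p_i-1}$ ways), your ``bracketing into possibly empty peak-$i$ brackets'' being just the dual phrasing of the paper's insertion of $p_i-1$ $\mathit{DU}$ valleys. The only difference is one of rigor, not of method: the paper simply asserts that the per-level choices are independent and iterates, whereas you make the implicit bijection explicit via induction and an order-preservation invariant.
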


\begin{figure}[t!]
	\centering
	\includegraphics[width=\linewidth]{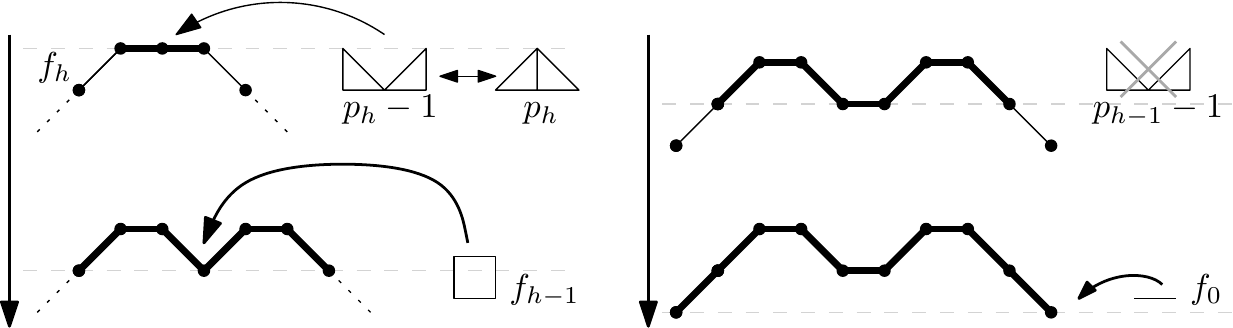}
	\caption{The top down construction of paths from the given sequence $(1,1,1,2,2)$.
		Note that $p_1-1=0$, and thus no $\mathit{DU}$ valley is inserted at height $1$.
	}
	\label{fig:top-down-building}
\end{figure}

\begin{proof}
	We start with the highest flats of the sequence $a$. There are $f_h$ of those flats. 
	Two (or more) such flats can either lie directly next to each other, 
	or they might be separated by a \emph{Down-right} move followed at some point by an \emph{Up-right} move. 
	We call this setting a $\mathit{DU}$ valley; we get such valleys by splitting peaks of height $h$ and reassembling them the other way round, 
	see Figure~\ref{fig:top-down-building}.
	A feasible Motzkin path has to have a $U$ slope at the very left and a $D$ slope at the very right of all height $h$ pieces. 
	The remaining $p_{h}-1$ $\mathit{DU}$ valleys can be freely placed around the $f_h$ flats, 
	that is we choose their places from $f_h + p_h -1$ available positions. 
	The number of ways to do this is
	\begin{linenomath}
	\begin{align}
		\label{eq:peak_combos}
		{f_h + p_h -1 \choose p_h -1}_.
	\end{align}
	\end{linenomath}
	Now we continue on the second highest level $h-1$. Naturally, the number of times that our Motzkin path rises above level $h-1$ is exactly the number $p_h$
	of peaks of height $h$. We can distribute our $f_{h-1}$ flats of height $h-1$ around those peaks, i.e. pick from $p_h + f_{h-1}$ many positions,
	hence we can choose from 
	\begin{linenomath}
	\begin{align}
		\label{eq:flat_combos}
		{p_{h} + f_{h-1} \choose f_{h-1}}
	\end{align}
	\end{linenomath}
	many possibilities. After placing the flats, we will have to place new valleys down to the next lower level around the existing $p_h$ peaks and
	$f_{h-1}$ flats. As before, the leftmost up and down slopes are fixed, hence the number of ways to distribute $p_{h-1} - 1$ valleys is
	given by the third factor in Equation~\eqref{eq:ma-product}.
	Since the choices in different levels are independent, we can iterate this reasoning until we include flats of height $0$.
\end{proof}

We conclude with a corollary of Theorems~\ref{th:displacement_equals_weighted_paths}~and~\ref{th:blocks_rearrange}: 

\begin{corollary}\label{cor:mapping}
	There exists a surjective mapping from permutations over $n$ elements into \emph{building sequences} satisfying the following condition:
	For any building sequence $a\in S(n,A)$, the number of permutations $\pi$ which are at distance $D(\pi) = 2d = 2A$ from the identity permutation 
	and that are mapped into this building sequence $a$ is precisely
	\begin{equation}\label{eq:perm-mapping}
	P(a):=m(a)\cdot perm(a),		
	\end{equation}
	where $m(a)$ is given by Equation~\eqref{eq:ma-product} and $\perm(a)$ by Equation~\eqref{eq:perm-product}.
	Therefore the total number of permutations at distance $2d=2A$ from the identity permutation satisfies
	\begin{equation}\label{eq:perm-count-via-sequences}
		D(n,d) \stackrel{\eqref{eq:displacement_formula_via_paths}}{=}
 		\sum_{\mz \in \MZ(n,A)}{\omega(\mz)} =	\sum_{a \in S(n,A)}{ P(a)}.
	\end{equation}
\end{corollary}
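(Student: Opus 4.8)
The plan is to realize the claimed surjection as a composition of two maps already supplied by the preceding results, and then to regroup the Guay-Paquet--Petersen sum of Theorem~\ref{th:displacement_equals_weighted_paths} according to building sequences. Theorem~\ref{th:displacement_equals_weighted_paths} gives a surjection $\pi \mapsto \mz(\pi)$ from $\Sn$ onto $\MZ(n,A)$ whose fibers satisfy $|\{\pi : \mz(\pi)=\mz\}| = \omega(\mz)$, and Proposition~\ref{prop:unique_sequence} assigns to each path $\mz\in\MZ(n,A)$ its unique building sequence $a^{(\mz)}\in S(n,A)$. I would take the claimed map to be the composite $\pi \mapsto a^{(\mz(\pi))}$.

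First I would check surjectivity. Since $\pi\mapsto\mz(\pi)$ is already onto $\MZ(n,A)$, it suffices to show $\mz\mapsto a^{(\mz)}$ is onto $S(n,A)$. For this I would note that every factor of the product~\eqref{eq:ma-product} is a binomial coefficient of the form $\binom{p+f-1}{p-1}$ or $\binom{p+f}{f}$ with $p\geq 1$ (all $p$-entries of a building sequence are nonzero by Definition~\ref{def:feasible_sequence}) and $f\geq 0$, so each factor is at least $1$ and hence $m(a)\geq 1$ for every $a\in S(n,A)$. By Theorem~\ref{th:blocks_rearrange} this $m(a)$ counts the paths with building sequence $a$, so each sequence in $S(n,A)$ is realized by at least one path; together with Proposition~\ref{prop:unique_sequence} this makes $\mz\mapsto a^{(\mz)}$ a surjection onto $S(n,A)$.

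Next I would count the permutations mapping to a fixed $a\in S(n,A)$. Because $a^{(\cdot)}$ is a genuine function, the fiber of the composite over $a$ is the disjoint union, over all paths $\mz$ with $a^{(\mz)}=a$, of the permutation fibers $\{\pi:\mz(\pi)=\mz\}$, so that
\begin{linenomath}
\begin{align*}
|\{\pi : a^{(\mz(\pi))} = a\}| = \sum_{\mz \,:\, a^{(\mz)}=a} \omega(\mz).
\end{align*}
\end{linenomath}
By Equation~\eqref{eq:perm-product} the weight $\omega(\mz)=\perm(a)$ is constant across this fiber, and by Theorem~\ref{th:blocks_rearrange} there are exactly $m(a)$ such paths, so the right-hand side equals $m(a)\,\perm(a)=P(a)$, which is the count asserted in~\eqref{eq:perm-mapping}.

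Finally, Proposition~\ref{prop:unique_sequence} shows the classes $\{\mz:a^{(\mz)}=a\}$ partition $\MZ(n,A)$ as $a$ ranges over $S(n,A)$, so reindexing the Guay-Paquet--Petersen sum gives
\begin{linenomath}
\begin{align*}
D(n,d) = \sum_{\mz \in \MZ(n,A)} \omega(\mz) = \sum_{a \in S(n,A)} \ \sum_{\mz \,:\, a^{(\mz)}=a} \omega(\mz) = \sum_{a \in S(n,A)} P(a),
\end{align*}
\end{linenomath}
which is the stated identity~\eqref{eq:perm-count-via-sequences}. I expect the only delicate point to be confirming that this partition is exhaustive in both directions---that every path carries a building sequence (Proposition~\ref{prop:unique_sequence}) and that every sequence in $S(n,A)$ is actually attained ($m(a)\geq 1$)---since beyond that the argument is pure bookkeeping layered on top of Theorems~\ref{th:displacement_equals_weighted_paths} and~\ref{th:blocks_rearrange}.
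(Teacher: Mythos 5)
Your proof is correct and takes essentially the same route as the paper, which states the corollary as an immediate consequence of Theorems~\ref{th:displacement_equals_weighted_paths} and~\ref{th:blocks_rearrange} without a written proof: compose the permutation-to-path surjection with the path-to-sequence map of Proposition~\ref{prop:unique_sequence}, use that $\omega(\mz)=\perm(a)$ is constant on each fiber by Equation~\eqref{eq:perm-product}, and count the $m(a)$ paths per sequence via Theorem~\ref{th:blocks_rearrange}. Your explicit verification that $m(a)\geq 1$ (so every sequence in $S(n,A)$ is attained and the map is genuinely surjective) is a detail the paper leaves implicit, and it is the right point to check.
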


\begin{example}
	The building blocks in Figure~\ref{fig:building_blocks}  yield ${3\choose 1}{3\choose 1}{3\choose 0}{2\choose 1} =18$ Motzkin paths, 
	and each path corresponds to $1200$ permutations. 
	So, there are $1200\cdot 18=21\, 600$ permutations mapping into the building sequence $a=(1,1,1,2,2)$.
\end{example}

\begin{remark}
	Theorem~\ref{th:blocks_rearrange} and Corollary~\ref{cor:mapping} allow for a dynamic program for counting and sampling weighted Motzkin paths,
	similar to Sections~\ref{sec:DP-counting} and~\ref{sec:DP-sampling}.
	Additionally, we can easily sample paths with a fixed number of highest peaks and flats, at the cost of an additional $O(n^3)$-factor in the running time,
	see Appendix~\ref{app:DP-top-down}.
\end{remark}

\section{Sampling Weighted Motzkin Paths by Length and Area}\label{sec:sampling}
In this section, we consider the task of selecting (sampling) permutations with uniform distribution over all permutations of a given total displacement.
By Corollary~\ref{cor:path-to-permutation} it is enough to sample Motzkin paths with the proper weights.
We have already seen in Section~\ref{sec:DP-sampling} that we can sample such weighted Motzkin paths using dynamic programming at the cost of large memory consumption.

We will show in Section~\ref{sec:sampling-markov-chain} an approach to sample weighted Motzkin paths based on the building sequences introduced in Section~\ref{sec:blocks_decomposition} that requires only $\bigO(n)$ memory.
In general, observe that sampling permutations
can be accomplished efficiently if we can sample  building sequences with a probability proportional to  $P(a)=m(a)\cdot perm(a)$ in polynomial time:  

\begin{restatable}[]{theorem}{sequencesamplingtopermutationstheorem}
\label{th:sequence-sampling-to-permutations}
	Every polynomial-time algorithm that samples sequences in $S(n,A)$ with probability $\pi(a)\propto P(a)$ can be turned into a polynomial-time algorithm for sampling permutations  uniformly at random among the permutations over $n$ elements and of total displacement $2d=2A$. 
\end{restatable}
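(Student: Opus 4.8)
The plan is to compose the given sequence sampler with two further sampling stages so that the resulting three-stage pipeline outputs each target permutation with probability exactly $1/D(n,d)$. First I would run the assumed algorithm to draw a building sequence $a \in S(n,A)$ with probability $\pi(a)\propto P(a)$; by Corollary~\ref{cor:mapping} we have $\sum_{a\in S(n,A)}P(a)=D(n,d)$, so the normalizing constant is exactly $D(n,d)$ and $a$ is returned with probability $P(a)/D(n,d)$. Second, given $a$, I would sample one of the $m(a)$ Motzkin paths generated by $a$ uniformly at random. Third, given the resulting path $\mz$, I would apply Corollary~\ref{cor:path-to-permutation} to sample uniformly, in time $\bigO(n)$, one of the $\omega(\mz)=\perm(a)$ permutations that map into $\mz$.

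Correctness follows from the two uniqueness facts already established. Fix any permutation $\pi$ of total displacement $2A$. By Theorem~\ref{th:displacement_equals_weighted_paths} it maps into a single path $\mz$, and by Proposition~\ref{prop:unique_sequence} this path comes from a unique building sequence $a=a^{(\mz)}$. Hence the only way for the pipeline to output $\pi$ is to select $a$ in the first stage, then $\mz$ in the second, then $\pi$ in the third, so the probability of outputting $\pi$ is the product of the three corresponding (conditional) probabilities. Multiplying them and using $P(a)=m(a)\,\perm(a)$ from \eqref{eq:perm-mapping} gives
\[
\frac{P(a)}{D(n,d)}\cdot\frac{1}{m(a)}\cdot\frac{1}{\perm(a)}=\frac{m(a)\,\perm(a)}{D(n,d)}\cdot\frac{1}{m(a)\,\perm(a)}=\frac{1}{D(n,d)},
\]
which is independent of $\pi$. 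Thus the pipeline is uniform over all permutations of total displacement $2A$.

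The step requiring the most care, and the main obstacle, is the second stage: drawing a uniform path among the $m(a)$ paths in polynomial time. For this I would reuse the level-by-level argument behind Theorem~\ref{th:blocks_rearrange}. The product \eqref{eq:ma-product} expresses $m(a)$ as a product of binomial coefficients, one per height level, each counting the interleavings of that level's flats with the valleys descending from the level above. Sampling a single uniform interleaving counted by a binomial ${N\choose k}$ is a standard stars-and-bars task achievable in polynomial time, and because the choices at distinct levels are independent (as shown in the proof of Theorem~\ref{th:blocks_rearrange}), drawing one interleaving per level and assembling them top-down produces a path that is uniform among the $m(a)$ possibilities. Since the first and third stages are polynomial by hypothesis and by Corollary~\ref{cor:path-to-permutation} respectively, the entire construction runs in polynomial time.
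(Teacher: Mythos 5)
Your proposal is correct and follows essentially the same route as the paper: sample a sequence, then a uniform path via the top-down construction behind Theorem~\ref{th:blocks_rearrange}, then a uniform permutation via Corollary~\ref{cor:path-to-permutation}. Your explicit computation that the output probability telescopes to $1/D(n,d)$ (using $P(a)=m(a)\,\perm(a)$ and the uniqueness facts) just makes precise what the paper leaves implicit.
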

\begin{proof}
Given a sequence $a\in S(n,A)$, the sampler maps this sequence into a random Motzkin path, and then into a random permutation as follows:
\begin{enumerate}
	\item	\label{sampler:seq-to-motz} Pick a Motzkin path $mz$ uniformly at random among those that can be created with $a$, that is, with
	probability
	$\frac{1}{m(a)}$. 
	\item	\label{sampler:motz-to-perm} Pick a permutation u.a.r.\ among those that map into the Motzkin path $mz$, that is, with probability
	$\frac{1}{perm(a)}$.
\end{enumerate}

\paragraph{Step~\ref{sampler:seq-to-motz} (sequences to Motzkin paths).} The top-down construction used to prove Theorem~\ref{th:blocks_rearrange} suggests also how to sample one of the $m(a)$ Motzkin paths for a given sequence $a$ with uniform distribution. Namely, we pick the positions of the $DU$ valleys at height $h$ uniformly at random (Equation~\eqref{eq:peak_combos}), then we pick the positions of the  $f_{h-1}$ flats uniformly at random (Equation~\eqref{eq:flat_combos}), and repeat this to the lower level exactly as described in the top-down construction. Since a particular path corresponds to exactly one choice in each of these steps, by Equation~\eqref{eq:ma-product} its probability is precisely $1/m(a)$.

\paragraph{Step~\ref{sampler:motz-to-perm} (paths to permutations).} This is shown in Corollary~\ref{cor:path-to-permutation} above.
\end{proof}

\subsection{Preliminary definitions on Markov chains}
In this section, we introduce some of the definitions on Markov chains used throughout this work (see e.g. \cite{LevPerWil09}). 
A Markov chain over a finite state space $S$ is specified by a transition matrix $P$, where $P(a,a')$ is the probability of 
moving from state $a$ to state $a'$ in one step. The $t^{th}$ power of the transition matrix gives the probability of moving from one state to another state in $t$ steps. The chain studied in this work is ergodic (see below for a proof), meaning that it has a unique \emph{stationary distribution} $\pi$, that is,  $\lim_{t\rightarrow \infty} P^t(a,a')=\pi(a')$ for any two states $a$ and $a'$.  

\paragraph{Reversible chains.}
We shall use the definition of a \emph{reversible} Markov chain, also called \emph{detailed balanced condition}: If the transition matrix $P$ admits a vector $\pi$ such that $\pi(a) P(a,a') = \pi(a')P(a',a)$ for all $a$ and $a'$, then $\pi$ is the \emph{stationary distribution} of the chain with transitions $P$. 

\paragraph{Total Variation Distance and Mixing Time.}
The \emph{total variation} distance of two distributions $\mu$ and $\pi$ is 
\[ d_{TV}(\mu,\pi) := \frac{1}{2} \cdot \sum_{a\in S} \Big|\mu(a)-\pi(a)\Big|. \]
The \emph{mixing time} of an ergodic Markov chain with transition matrix $P$ is defined as 
\[
t_{mix} (\epsilon) = \min\left\{t : d_{TV}(P^t(a_0,\cdot),\pi)\leq \epsilon
 \ \text{ for all } a_0 \in S\right\} .
\]
It is common to also define the quantity $t_{mix}:=t_{mix}(1/4)$, which is  justified by the fact that, for any $\epsilon$, $t_{mix}(\epsilon) \leq \lceil\log_2 1/\epsilon\rceil \cdot  t_{mix}$. In our experiments, we shall evaluate the total variation distance for $\epsilon =0.05$ to get better estimate.

\subsection{A Markov Chain Sampler}
\label{sec:sampling-markov-chain}
Suppose we have a set of $k$ possible local changes transforming any sequence $a$ into another sequence $a'$ such that all sequences can be obtained
by applying a certain number of such operations. Then the following standard Metropolis chain samples sequences with the desired distribution:
\begin{enumerate}
	\item	\label{trans1} With probability $\frac{1}{2}$ do nothing. Otherwise, 
	\item	\label{trans2} Select one of the $k$ local operations u.a.r. If this operation cannot be applied to the current sequence $a$ (the new
		sequence is unfeasible) do nothing; Otherwise, let $a'$ be the sequence obtained from $a$ by applying this operation;
	\item	\label{trans3} Accept the operation transforming $a$ to $a'$ with probability
		\begin{equation}
		\label{eq:accept_probability}
			A(a,a') :=\min\left\{1,\frac{P(a')}{P(a)}\right\}=\min\left\{1,\frac{m(a')}{m(a)}\cdot \frac{perm(a')}{perm(a)}\right\},
		\end{equation}
		and do nothing with remaining probability $1 - A(a,a')$.
\end{enumerate}

\paragraph{Local operations over the sequences}
We define our Metropolis chain $\MCblocks$ through four types of operations: Peak to Flat (PF), Flat to Valley (FV), Flat to Flat (FF), and Peak into Valley (PV). We formally define them as:
\begin{linenomath}
 \begin{align*}
 	PF(i,j) :=& \begin{cases}
 		p_i	& \leftarrow\	p_i - 1		\\
 		f_{i-1}	& \leftarrow\	f_{i-1} + 2	\\
 		f_j	& \leftarrow\	f_j - 1	\\
 		f_{j+1}	& \leftarrow\	f_{j+1} + 1	
 	\end{cases}, & 
 	FV(i,j) :=& \begin{cases}
 		f_i	& \leftarrow\	f_i - 2		\\
 		p_i	& \leftarrow\	p_i + 1	\\
 		f_j	& \leftarrow\	f_j - 1	\\
 		f_{j+1}	& \leftarrow\	f_{j+1} + 1	
 	\end{cases} 
 	\\
 	FF(i,j) :=& \begin{cases}
 		f_i	& \leftarrow\	f_i - 1		\\
 		f_{i+1}	& \leftarrow\	f_{i+1} + 1	\\
 		f_j	& \leftarrow\	f_j - 1		\\
 		f_{j-1}	& \leftarrow\	f_{j-1} + 1	\\
 	\end{cases} &
 	PV(i,j) :=& \begin{cases}
 		p_i	& \leftarrow\	p_i - 1		\\
 		f_{i-1}	& \leftarrow\	f_{i-1} + 2	\\
 		p_j	& \leftarrow\	p_j - 1		\\
 		f_{j}	& \leftarrow\	f_{j} + 2	\\
 	\end{cases}
 \end{align*}
\end{linenomath}
Note that each type of operation applies to two indices $i$ and $j$, and we also implicitly consider the reversed operations which ``undo'' the changes. We now explain step~\ref{trans2} of the chain $\MCblocks$ in more detail: The Markov chain $\MCblocks$ picks two indices $i$ and $j$ at random, then picks one of the four operations above, and decides with probability $1/2$ whether to choose the operation or its reversed version.
As for step~\ref{trans3}, computing the transitional probability $A(a,a')$ can be done in constant time as only a few of the factors in Equations~\eqref{eq:perm-product} and~\eqref{eq:ma-product} change.

\begin{restatable}[]{theorem}{MCstationarytheorem}
 \label{th:MC:stationary}
 	The Markov chain $\MCblocks$ defined above is ergodic and its unique stationary distribution satisfies $\pi(a)\propto P(a)$ for every $a\in S(n,A)$.
 \end{restatable}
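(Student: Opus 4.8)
The plan is to separate the two assertions in the standard way for Metropolis chains: first identify the candidate stationary distribution via the detailed balance (reversibility) condition, and then establish ergodicity, which for a finite state space such as $S(n,A)$ reduces to showing that $\MCblocks$ is both \emph{aperiodic} and \emph{irreducible}. Once the chain is ergodic it has a unique stationary distribution, and reversibility pins this distribution down as $\pi\propto P$. So I would verify three things in turn: aperiodicity, the detailed balance equation for $\pi\propto P$, and irreducibility.

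Aperiodicity is immediate: by step~\ref{trans1} every state carries a self-loop of probability at least $\tfrac12$, so $P(a,a)>0$ for all $a$. For detailed balance I would first argue that the \emph{proposal} distribution is symmetric. The proposal picks a pair of indices $(i,j)$ uniformly, one of the four operation types uniformly, and then the operation or its reverse with probability $\tfrac12$ each; since the set of moves is closed under reversal and all of these choices are made uniformly, the multiset of (operation, index, direction) triples realizing a candidate move $a\to a'$ maps bijectively, under reversal, onto those realizing $a'\to a$, giving $Q(a,a')=Q(a',a)$. Writing $\pi(a)\propto P(a)$ and using the acceptance rule $A(a,a')=\min\{1,P(a')/P(a)\}=\min\{1,\pi(a')/\pi(a)\}$ from Equation~\eqref{eq:accept_probability}, the one-step probability for $a\neq a'$ factors as $P(a,a')=Q(a,a')\,A(a,a')$, so
\begin{equation*}
\pi(a)\,P(a,a') = Q(a,a')\,\min\{\pi(a),\pi(a')\},
\end{equation*}
which is manifestly symmetric in $a$ and $a'$. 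Hence $\pi(a)P(a,a')=\pi(a')P(a',a)$, so $\pi\propto P$ satisfies detailed balance and is stationary; note $P(a)=m(a)\,\perm(a)\ge 1>0$ for every $a\in S(n,A)$, so $\pi$ is a well-defined positive distribution.

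The heart of the argument is irreducibility: I would show that the four operations together with their reverses connect all of $S(n,A)$. Since every move is reversible, it suffices to fix a single canonical sequence $a^\star\in S(n,A)$ and prove that every $a\in S(n,A)$ can be transformed into $a^\star$ by a finite sequence of valid moves; reversibility then gives a walk between any two sequences through $a^\star$. A natural strategy is to process the sequence top-down: use $PF$ and $PV$ to dismantle peaks at the highest levels, turning them into flats (each operation preserves both the width~\eqref{eq:widthI} and the area~\eqref{eq:areaI}), and use $FF$ to redistribute the resulting flats toward a prescribed canonical height profile, lowering the maximum height step by step until $a^\star$ is reached.

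The main obstacle is precisely that every intermediate sequence must remain a \emph{valid} building sequence, i.e.\ all entries non-negative and $p_i>0$ for every $i$ up to the current maximum height. These constraints are genuinely restrictive: one cannot simply delete the top peak of a single ``mountain'' sequence, because $PF$ needs a spare flat to shift, while naively applying $PV$ can leave flats at a height whose peak count has just dropped to zero, which is forbidden. The delicate part of the proof is therefore to schedule the operations (and choose the second index $j$) so that the $p_i>0$ conditions are never violated — for instance creating flats with $PV$ on a level that still retains a peak before lowering peaks with $PF$, and draining a level of its flats before removing its last peak. Once this reduction-to-$a^\star$ lemma is established, irreducibility follows, and combining it with aperiodicity yields ergodicity; together with the detailed balance computation above this gives the claimed unique stationary distribution $\pi\propto P$.
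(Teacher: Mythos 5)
Your aperiodicity and detailed-balance arguments are correct and essentially identical to the paper's: the lazy step gives $P(a,a)\geq\tfrac12$, the proposal kernel is symmetric because the move set is closed under reversal and all choices are uniform, and the Metropolis acceptance rule~\eqref{eq:accept_probability} then yields $\pi(a)P(a,a')=\pi(a')P(a',a)$ for $\pi\propto P$, exactly as in the paper's computation with $P(a,a')=A(a,a')/(2k)$. The problem is the irreducibility part, which is the real content of the theorem: you correctly identify the central obstacle --- every intermediate sequence must satisfy the validity constraints of Definition~\ref{def:feasible_sequence}, in particular $p_i>0$ for all levels up to the current maximum height --- but you then leave the resolution as an unproven ``reduction-to-$a^\star$ lemma,'' with only an informal remark about scheduling $PV$, $PF$ and $FF$ moves so that no level loses its last peak prematurely. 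That scheduling claim is precisely where a direct sequence-level induction gets hard (your own examples show single moves can create forbidden configurations), and asserting that it ``can be done'' is not a proof. As written, the proposal establishes stationarity and aperiodicity but not connectivity, so ergodicity is not proved.

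The paper avoids this difficulty by changing the level of abstraction: instead of manipulating building sequences directly, it works with Motzkin \emph{paths}. It defines the canonic path in $\MZ(n,A)$ (the path whose every prefix has maximal area), shows that any path of the same width and area can be swept into the canonic path by left-to-right applications of the local path operations of Figure~\ref{fig:path-moves}, and then observes that each such path operation is simulated by one or two of the sequence operations of Figure~\ref{fig:basic-moves}. Since every building sequence is realized by some path (Proposition~\ref{prop:unique_sequence}) and every intermediate object in the sweep is a genuine Motzkin path --- hence projects to a \emph{valid} building sequence automatically --- the constraint $p_i>0$ never needs to be checked separately; it holds by construction. If you want to salvage your sequence-level route, you would need to prove your scheduling lemma in full (a nontrivial case analysis); alternatively, adopt the paper's device of lifting sequences to paths, where the canonic path supplies the canonical target $a^\star$ and validity of intermediates comes for free.
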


\begin{proof}
The proof consists of two steps. First, we have to show that the chain is ergodic, that is, it is aperiodic and connected (see e.g. \cite{LevPerWil09}). Then we use the standard detail balance condition to obtain the stationary distribution. 

\subsubsection{Connectivity of $\MCblocks$.} 

To prove that the chain $\MCblocks$ is connected (from every \emph{building sequence} $a$ we can reach every other building sequence $a'$ in a maximum of $\bigO(A)$ operations) we argue in two steps. 
Intuitively, we show that we can transform any two \emph{paths} into each other by some operations depicted in Figure~\ref{fig:path-moves}. 
Then it can be seen that every operation in Figure~\ref{fig:path-moves} corresponds to a sequence of operations in the Markov chain $\MCblocks$, given in Figure~\ref{fig:basic-moves}.
Formally:

Every path of width $n$ and area $A\leq \left \lfloor n^2/4 \right \rfloor $ can be turned into any other path of the same area and width by using the operations in Figure~\ref{fig:path-moves}. 
To prove this we consider a \emph{canonic path} for a given width $n$ and area $A$. 
The canonic path is the uniquely defined path $\mz \in \MZ(n,A)$ for which the following holds:
For every $i$, after $i$ steps (i.e. between $x=0$ and $x=i$) $\mz$ has maximum area among all paths in $\MZ(n,A)$.
The possible forms of the canonic path 
are shown in Figure~\ref{fig:connectivity-1}. 
Any given path with width $n$ and area $A$ can be transformed into the canonic path of the same area using the steps from Figure~\ref{fig:path-moves}. 
We overlay the given path with the canonic path and proceed in steps to the right as is schematically shown in Figure~\ref{fig:connectivity-1} with the black path being the given path and the red path being the canonic path.
There are three possibilities. Either the paths coincide, in which case we proceed to the right, or the given path differs proceeding with a $D$ move or with an $H$ move.
In both cases the given path must intersect the canonic path on the falling part because otherwise the area cannot be the same. Now we use the operations in
Figure~\ref{fig:path-moves} in horizontal sweeps from left
to right to fill-in the missing area of the canonic path. At
the end both paths must coincide because the areas are the
same.

Each of these operations can be simulated by some
operations on the sequences in
Figure~\ref{fig:basic-moves}. This can be seen immediately
because the four cases in Figure~\ref{fig:path-moves}
correspond directly to one or two operations in
Figure~\ref{fig:basic-moves}.

\begin{figure}[t!]
	\includegraphics[width=\textwidth]{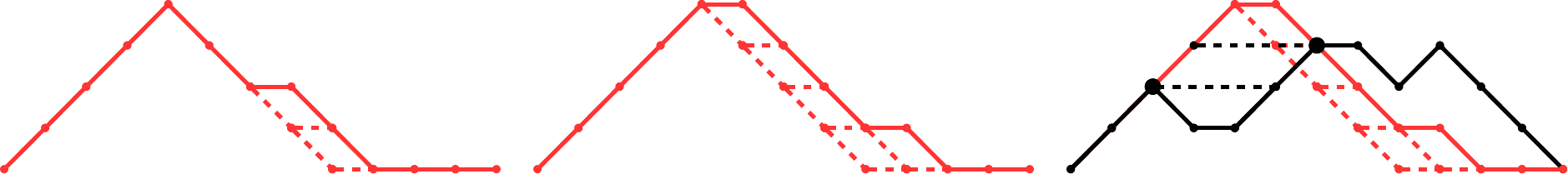}	
	\caption{ 
		(left, middle) The canonic path for $n=12$ and $A=18$, resp. $A=21$. 
		(right) Building a canonic path (red) from a given path (black) with same area.}
	\label{fig:connectivity-1}
\end{figure}

\begin{figure}
	\begin{subfigure}{0.49\textwidth}
		\centering
		\includegraphics[scale=.7]{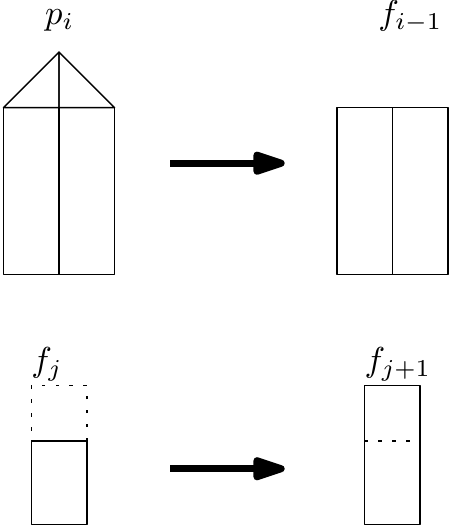}
		\caption{Peak to Flat.}
		\label{fig:PF}
	\end{subfigure}
	\begin{subfigure}{0.49\textwidth}
		\centering
		\includegraphics[scale=.7]{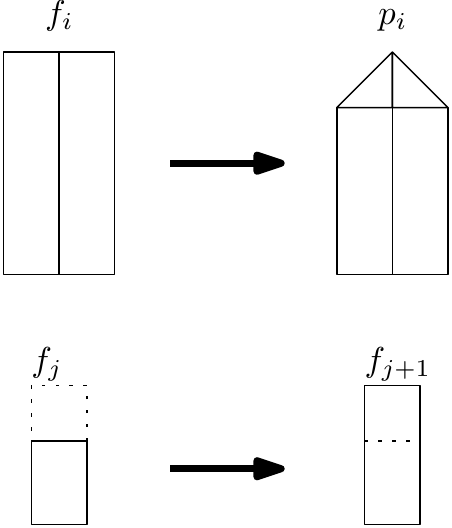}
		\caption{Flat to Valley.}
		\label{fig:FV}
	\end{subfigure}
	
	\begin{subfigure}{0.49\textwidth}
		\centering
		\includegraphics[scale=.7]{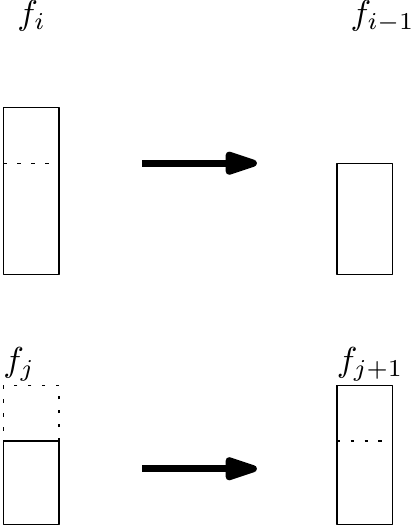}
		\caption{Flat to Flat.}
		\label{fig:FF}
	\end{subfigure}
	\begin{subfigure}{0.49\textwidth}
		\centering
		\includegraphics[scale=.7]{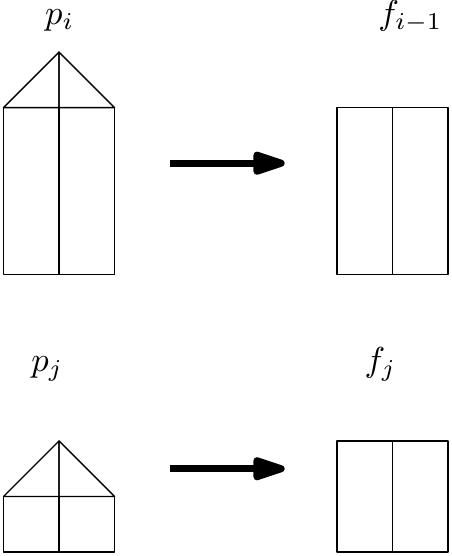}
		\caption{Peak into Valley.}
		\label{fig:PV}
	\end{subfigure}
	\caption{The basic operations over the sequences.}
	\label{fig:basic-moves}
\end{figure}

\begin{figure}
	\centering
	\includegraphics[scale=.8]{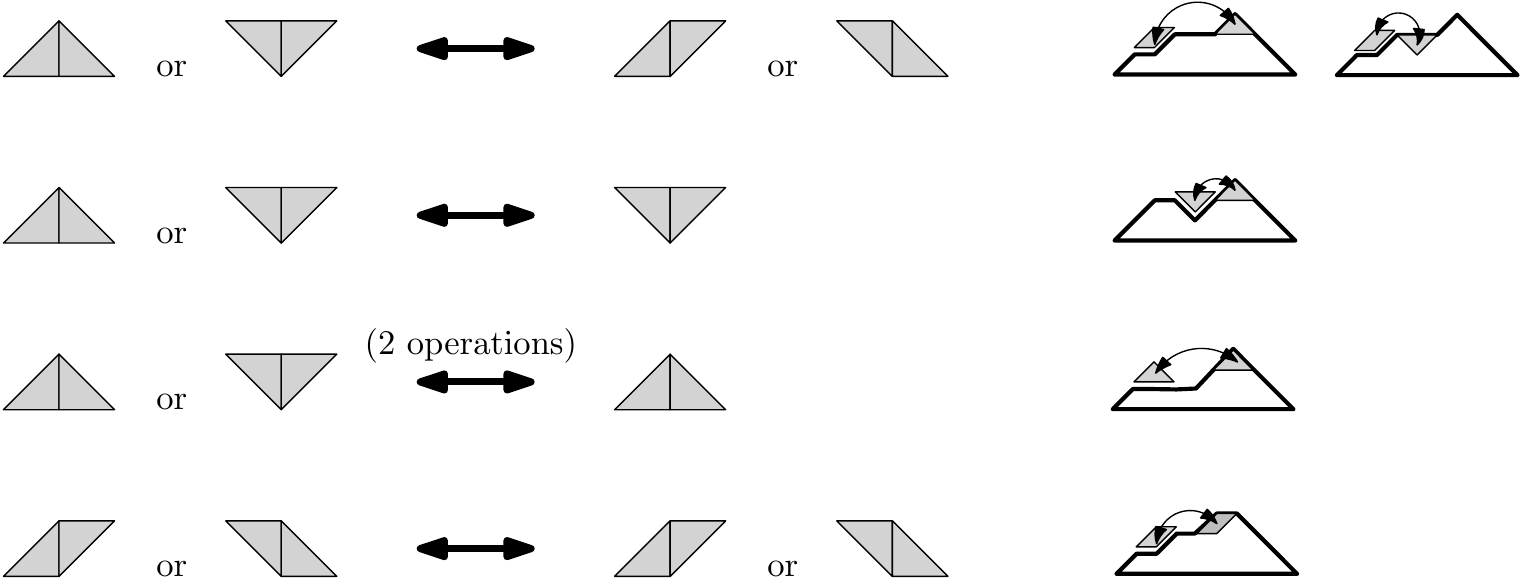}
	\caption{Changing \emph{the path}. The shape on one side is transformed to the shape on the other side. In some cases we need two of the operations defined in Figure~\ref{fig:basic-moves}.}
	\label{fig:path-moves}
\end{figure}

\subsubsection{Stationary Distribution of $\MCblocks$.}
It is well-known that the Metropolis chains with acceptance probability $A(a,a')=\min\left\{1,\frac{\pi(a')}{\pi(a)}\right\}$ have stationary distribution $\pi$ since the detailed balance condition is obviously satisfied: given that the number of operations is $k$, we have 
\begin{linenomath}
\begin{align*}
	P(a,a') =& \frac{A(a,a')}{2k} & \text{and}& & P(a',a) =& \frac{A(a',a)}{2k}
\end{align*}
\end{linenomath}
and the definition of $A(\cdot,\cdot)$ yields the detailed balance condition, that is,
$\pi(a) P(a,a')=\pi(a')P(a',a).$
\end{proof}

\subsubsection{Experimental Evaluation of $\MCblocks$}
We are interested in the required number of steps until the distribution of $\MCblocks$ is sufficiently close to its stationary distribution.
We measure the distance between two distributions by the \emph{total variation distance}.
The \emph{mixing time} of a Markov chain is the smallest time $t$ such that the total variation distance between the stationary distribution and the distribution after $t$ steps, starting from any state, is smaller than some small $\epsilon > 0$.

We study the mixing time of $\MCblocks$ for a given area $A$ and a given width $n$ by running the following experiment. 
We estimate the distribution after a given number of steps by repeatedly running $\MCblocks$ with an initial state $a_0$ defined as follows: 
The building block sequence consists of one peak of height $h$ for every $h\le \lfloor \sqrt{A} \rfloor$ and the remaining area and width is filled greedily with flats of maximal possible height.
The total variation distance of the distribution of $\MCblocks$ after some number of steps $t$ from its stationary distribution $\pi$ is 
\[ d_{TV}(P^t(a_0,\cdot),\pi) = \frac{1}{2} \cdot \sum_{a\in S(n,A)} \Big|P^t(a_0,a)-\pi(a)\Big|. \]
We estimate the mixing time for a given area $A$ and a given width $w$ by computing the total variation distance for increasing $t$ until the total variation distance is below $0.05$.
\begin{figure}[t]
	\centering
	\includegraphics[width=.49\linewidth]{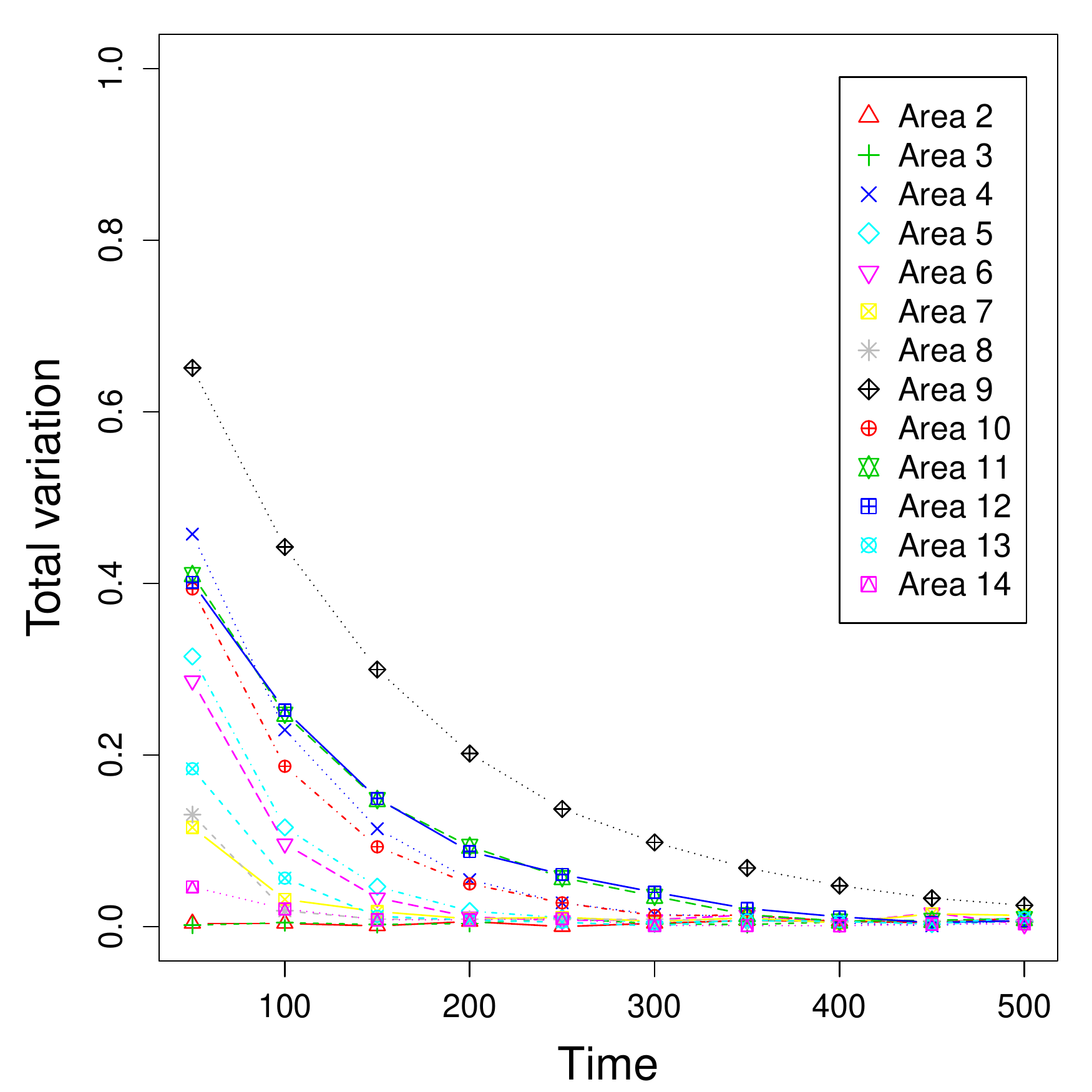}
	\hfill%
	\includegraphics[width=.49\linewidth]{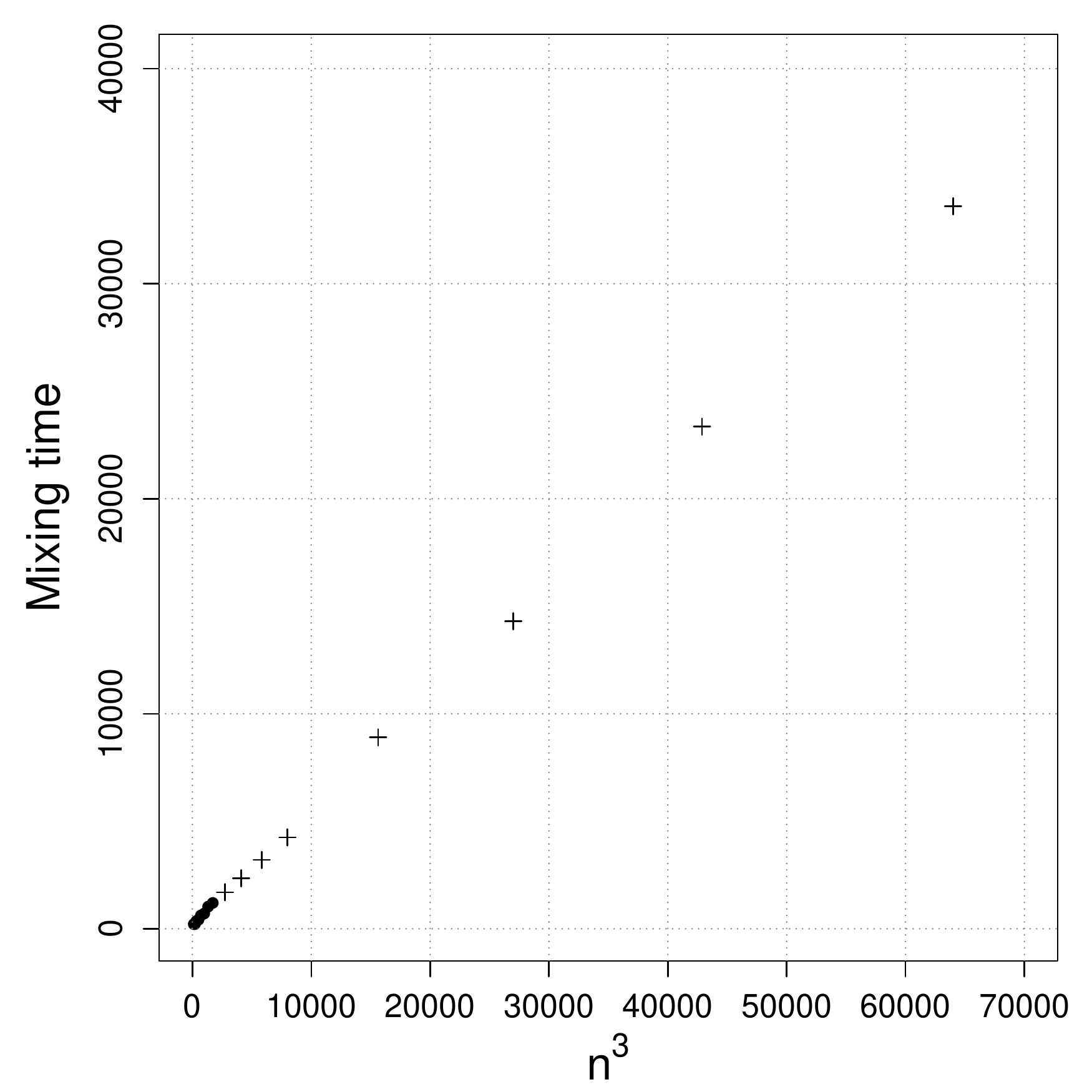}
	\caption{(left) Total variation for $n=8$ and all $A \le (n/2)^2$ with $|S(8,A)|>1$. 
		(right) Maximal mixing time for given widths ($\bullet$), mixing time for areas $A^*_n$ ({\small+}).}
	\label{fig:experiments}
\end{figure}

\renewcommand{\arraystretch}{1.3}
\setlength{\tabcolsep}{0.5em}
\begin{table}
	\centering
	\begin{tabular}{llll}
		\toprule
		\emph{width} & \emph{area}        & \emph{runs} & \emph{mixing time} \\ \midrule
		4 to 12      & all relevant areas & 10\,000        & 200 to 1\,200        \\
		14           & 36                 & 100\,000       & 1\,700             \\
		16           & 49                 & 100\,000       & 2\,350             \\
		18           & 64                 & 100\,000       & 3\,200             \\
		20           & 81                 & 100\,000       & 4\,250             \\
		25           & 144                & 100\,000       & 8\,900             \\
		30           & 196                & 100\,000       & 14\,300            \\
		35           & 289                & 400\,000       & 23\,350            \\
		40           & 361                & 500\,000       & 33\,600            \\ \bottomrule\hline
	\end{tabular}
	\vspace{1em}
	\caption{Experimental setup: The distribution after $t$ steps was estimated using 10\,000 to 500\,000 runs of $\MCblocks$. The maximal mixing time was estimated by computing the mixing time for $A^*_n$ for widths larger than 12.}
	\label{tab:experiments}
\end{table}

Figure~\ref{fig:experiments} (left) illustrates the mixing time for width 8 and every area $A$ with more than one possible building block sequence.
The maximal mixing time (400 steps) is necessary for area $9$. 
In fact, for every width smaller than 13, the mixing time is maximal for area $A^*_n = ((n-2)/2)^2$ if $n$ is even and $A^*_n = ((n-1)/2)^2$ otherwise. 
This is due to our choice of the initial state of $\MCblocks$.
We estimate the maximal mixing time for widths larger than 12 by computing the mixing time for $A^*_n$ \emph{only}, as the number of repeats necessary to estimate the distribution of $\MCblocks$ after $t$ steps depends on the number of possible building block sequences, which grows exponentially depending on $n$.
Figure~\ref{fig:experiments} (right) shows the maximal mixing time up to width 40. 
The plot suggests that the number of steps necessary to approximate the stationary distribution does not grow exponentially depending on the width $n$, the algorithm is probably faster than the sampler based on dynamic programming and the results suggest that the MCMC sampler achieves the mixing time $\bigO(n^3)$.

\begin{conjecture}
\label{conj:cubic-mixing-time}
$\MCblocks$ mixes in time $\bigO(n^3)$.
\end{conjecture}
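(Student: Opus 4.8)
The plan is to bound the mixing time by controlling the spectral gap of $\MCblocks$ through the multicommodity-flow (canonical paths) method, and then to convert this into a bound on $t_{mix}$. By Theorem~\ref{th:MC:stationary} the chain is reversible with $\pi(a)\propto P(a)=m(a)\,\perm(a)$, and its transition probabilities are explicit: for neighbours $a,a'$ we have $P(a,a')=A(a,a')/(2k)$ with $k=\Theta(n^2)$, so the edge conductances $Q(e)=\pi(e^-)\,P(e^-,e^+)$ are fully determined. Writing $t_{rel}=(1-\lambda_2)^{-1}$ for the relaxation time, the whole problem reduces to routing a unit of flow between every ordered pair of states with small congestion.

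First I would construct, for each pair $(x,y)\in S(n,A)^2$, a canonical path $\gamma_{xy}$ in the transition graph of $\MCblocks$. Here the connectivity argument used in the proof of Theorem~\ref{th:MC:stationary} is a direct asset: it already transforms an arbitrary Motzkin path into the \emph{canonic path} of $\MZ(n,A)$ by left-to-right horizontal sweeps (Figure~\ref{fig:connectivity-1}), and each elementary path move (Figure~\ref{fig:path-moves}) is simulated by one or two sequence operations (Figure~\ref{fig:basic-moves}). Canonicalising this into a deterministic routing $x\leadsto a^\ast\leadsto y$ through the canonic building sequence $a^\ast$ yields paths of length $|\gamma_{xy}|=\bigO(A)=\bigO(n^2)$.

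The crux is the congestion estimate
\[
t_{rel}\;\le\;\rho\;=\;\max_{e}\frac{1}{Q(e)}\sum_{\gamma_{xy}\ni e}\pi(x)\,\pi(y)\,|\gamma_{xy}|.
\]
I would bound the weighted flow $\sum_{\gamma_{xy}\ni e}\pi(x)\pi(y)$ across a fixed edge $e=(e^-,e^+)$ by the standard injective-encoding trick: from $e$ together with a single auxiliary ``complementary'' state I reconstruct the endpoints $(x,y)$ routing through $e$, so that these pairs inject into $S(n,A)$ and their contribution telescopes against $Q(e)$. \textbf{This is where I expect the main obstacle.} Because $\pi$ is extremely non-uniform, the product $\pi(x)\pi(y)$ must be shown comparable to $\pi(e^-)$ times the weight of the encoding, which forces one to understand how the factors $\perm(a)=\prod(2i+1)^{f_i}\prod i^{2p_i}$ and the binomials of $m(a)$ evolve multiplicatively along a sweep. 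One must rule out that some edge sits at an exponentially deep bottleneck of this weight landscape; quantifying the drift induced by $\perm$, and choosing the routing so that it climbs and descends the weights monotonically, is the heart of the difficulty and is precisely why the statement is only conjectured.

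Finally, flow gives a spectral bound, which converts to mixing via $t_{mix}(\epsilon)\le t_{rel}\bigl(\log(1/\pi_{\min})+\log(1/\epsilon)\bigr)$, where $\pi_{\min}$ denotes the least stationary probability; since $D(n,d)\le n!$ we have $\log(1/\pi_{\min})=\bigO(n\log n)$, so this route most plausibly delivers $\bigO(n^3\,\polylog\,n)$ rather than a clean $\bigO(n^3)$. To remove the stray logarithmic factor and match the experimentally observed exponent exactly, I would instead attempt a path-coupling argument that targets $t_{mix}$ directly: equip $S(n,A)$ with a weighted metric in which neighbouring sequences are at distance one, couple the index and operation choices of two copies, and show the expected distance contracts. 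The obstacle there is identical in spirit, since the acceptance ratios $A(a,a')$ depend on the same volatile weights $P$, so exhibiting a metric under which the coupling is non-expanding, let alone contracting, again requires controlling the geometry of $m\cdot\perm$.
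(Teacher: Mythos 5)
This statement is a \emph{conjecture} in the paper: the authors offer no proof at all, only experimental evidence (the mixing-time measurements in Figure~\ref{fig:experiments} and Table~\ref{tab:experiments}, which fit a cubic curve up to width $40$). Your proposal is likewise not a proof, and to your credit you say so --- but it is worth being precise about where it stops being one. The two steps that would constitute the actual mathematical content are exactly the ones you defer: (i) the congestion estimate, i.e.\ showing that the injective-encoding argument survives the highly non-uniform stationary distribution $\pi(a)\propto m(a)\cdot\perm(a)$, whose factors $\prod_i (2i+1)^{f_i}\prod_i i^{2p_i}$ and the binomials in Equation~\eqref{eq:ma-product} can vary by exponential amounts along a sweep; and (ii) for path coupling, exhibiting a metric under which the coupled chains contract. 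Everything else in your write-up (reversibility from Theorem~\ref{th:MC:stationary}, canonical paths through the canonic state via the moves of Figures~\ref{fig:path-moves} and~\ref{fig:basic-moves}, the comparison $t_{mix}(\epsilon)\le t_{rel}(\log(1/\pi_{\min})+\log(1/\epsilon))$) is standard scaffolding that transfers the difficulty rather than resolving it. Note also that routing every pair $x\leadsto a^\ast\leadsto y$ through a single hub concentrates flow on the edges incident to $a^\ast$; hub routings are generally only viable when the hub carries large $\pi$-mass, which is itself an unproven property of the canonic sequence here.

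There is additionally a quantitative inconsistency in your accounting: even a \emph{fully successful} execution of the flow plan cannot plausibly land at $\bigO(n^3)$. Your canonical paths have length $|\gamma_{xy}|=\Theta(A)=\Theta(n^2)$ in the worst case, and each transition probability carries the factor $1/(2k)$ with $k=\Theta(n^2)$ proposal choices (two indices $i,j$ over $\bigO(n)$ heights), so $1/Q(e)\ge 2k/\pi(e^-)$. Plugging into
\[
\rho=\max_{e}\frac{1}{Q(e)}\sum_{\gamma_{xy}\ni e}\pi(x)\,\pi(y)\,|\gamma_{xy}|,
\]
even under the best conceivable encoding bound $\sum_{\gamma_{xy}\ni e}\pi(x)\pi(y)\le\pi(e^-)$, one gets $t_{rel}\le\rho=\bigO(n^2\cdot n^2)=\bigO(n^4)$, hence $t_{mix}=\bigO(n^5\log n)$ after multiplying by $\log(1/\pi_{\min})=\bigO(n\log n)$ --- not the $\bigO(n^3\,\polylog n)$ you state. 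So the spectral route, as designed, overshoots the conjectured exponent by its very arithmetic; only the path-coupling route could in principle match $\bigO(n^3)$, and that is precisely the branch where you establish nothing. In short: the paper conjectures, you conjecture with more vocabulary, and the one concrete bound your plan would yield if completed is too weak to prove the statement.
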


\begin{remark}
The implementation of $\MCblocks$ is available at \REPOURL.
\end{remark}

%
\bibliographystyle{plain}
\bibliography{bibliography}

\newpage
\section*{Appendix}
\appendix

\section{Postponed Proofs}
\label{app:omitted-proofs}

\setcounter{corollary}{0}
\pathtopermcorollary

\begin{proof}
Given a Motzkin path $mz$, we can sample a permutation u.a.r.\ among all
permutations that map into $mz$. For this, consider $mz$ as a (feasible) sequence of letters $U$, $D$ and $H$ (denoting diagonally \emph{Up-right} moves,
diagonally \emph{Down-right} moves and \emph{Horizontal-right} moves). Then do the following:
\begin{enumerate}
	\item	Scan the sequence from left to right. When a new $D$ is found, match it with any of the $U$ on the left that are not yet matched to
		any $D$ (choose such a $U$ u.a.r.). This step constructs left-to-right edges from $U$ to $D$:
		\begin{center}
			\includegraphics{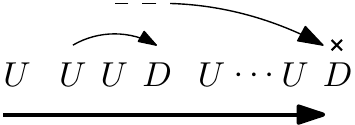}
		\end{center}
	\item	Scan from right to left the sequence, matching a newly encountered $U$ with any $D$ on the right not yet matched in this step (the
		matchings of the previous ``left-to-right'' step do not count). This step is the symmetric of the previous one and it constructs
		right-to-left edges from $D$ to $U$.
	\item	For any $H$ we choose a ``fixed point'', a ``left-to-right'' or a ``right-to-left'' edge, meaning the following: The number of
		left-to-right edges crossing this $H$ -- corresponding to a flat of height $i$ -- is equal to the number $i$ of right-to-left edges also
		crossing this $H$ (this property is due to the ``balanced'' matchings of $U$ and $D$). So there are $2i+1$ options for $H$, where the
		$2i$ options correspond to breaking one of the $i$ left-to-right edges, or one of the $i$ right-to-left edges. The last option is to let the $H$ map to a trivial cycle of the permutation. We choose one of these $2i+1$
		options u.a.r.
\end{enumerate}
\end{proof}

\section{Experiments}
We estimate the distribution after a given number of steps by repeatedly running $\MCblocks$ for an appropriate number of steps depending on the width (cf. Table~\ref{tab:experiments}).
To compute the total variation distance of the distribution of $\MCblocks$ after $t$ steps from its stationary distribution $\pi$, i.e.
\[ d_{TV}(P^t(a_0,\cdot),\pi) = \frac{1}{2} \cdot \sum_{a\in S(n,A)} \Big|P^t(a_0,a)-\pi(a)\Big|, \]
we do not need to know all building block sequences in $S(n,A)$. 
Let $A_t\subseteq S(n,A)$ be the set of all building block sequences reached by $\MCblocks$ after $t$ steps in at least one run.
Then, the total variation distance after $t$ steps is 
\[ d_{TV}(P^t(a_0,\cdot),\pi) = \frac{1}{2} \cdot \Big(\, \big(D(n,A) -  \sum_{a\in A_t} (m(a) \cdot perm(a))\big) + \sum_{a\in A_t} \Big|P^t(a_0,a)-\pi(a)\Big|\, \Big). \]

\begin{figure}
	\centering
	\includegraphics[width=.49\linewidth]{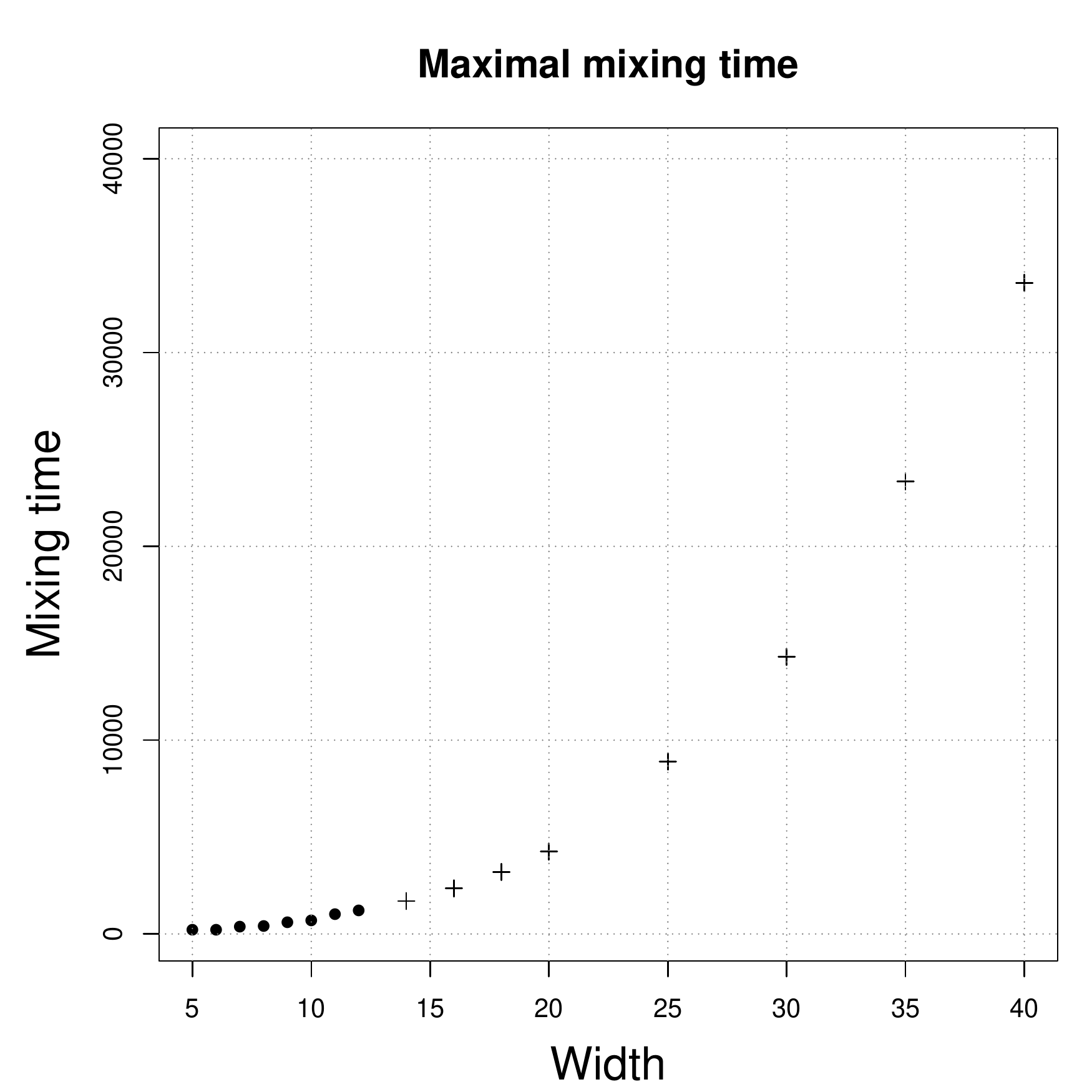}
	\hfill%
	\includegraphics[width=.49\linewidth]{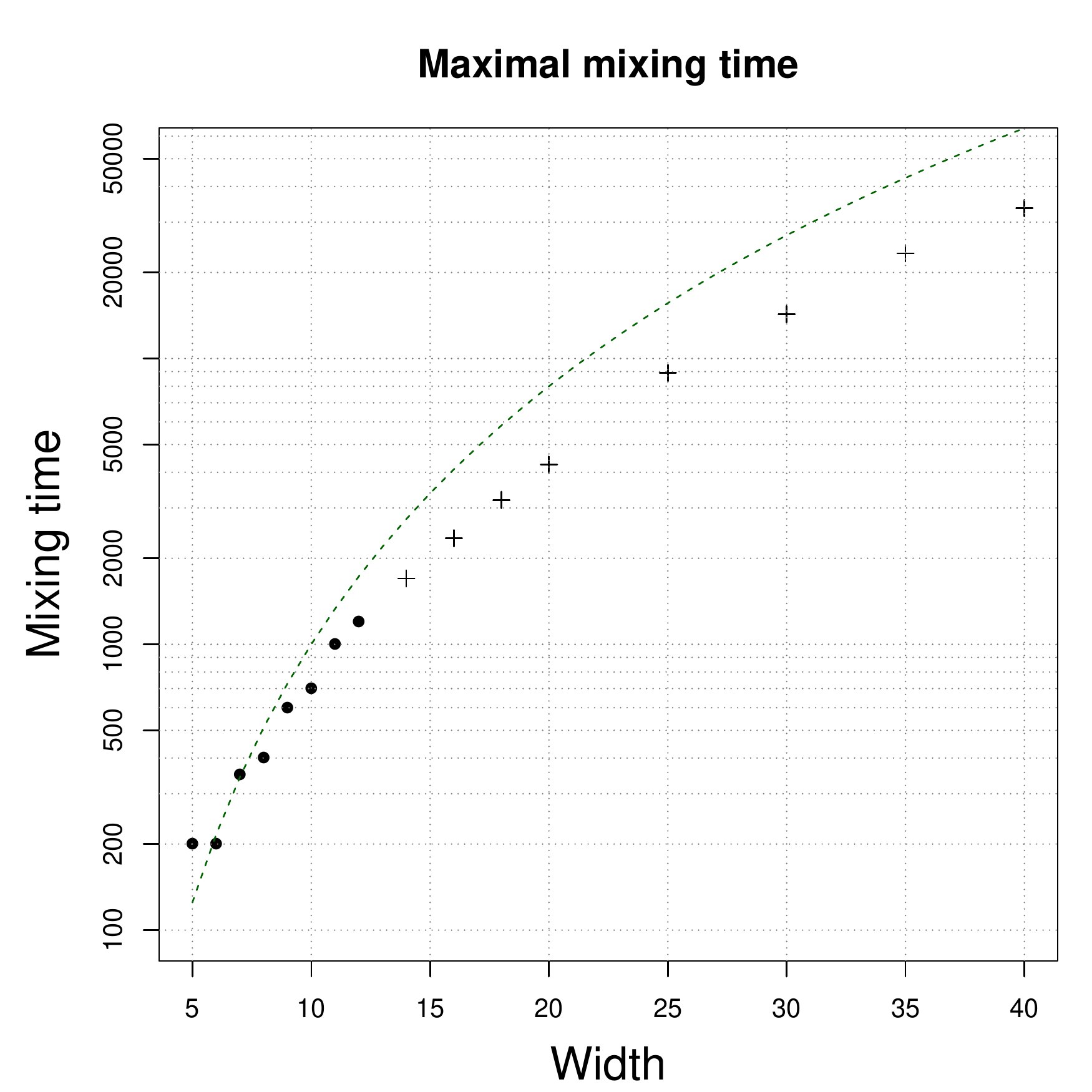}
	\caption{Plot of the maximal mixing time for a given width (filled circle) and mixing times for area $A^*_n$ (cross) with a linear y-axis (left) and a logarithmic y-axis (right). The dashed line in the right plot shows $f(x) = x^3$.}
\end{figure}

\begin{figure}
	\centering
	\includegraphics[width=.49\linewidth]{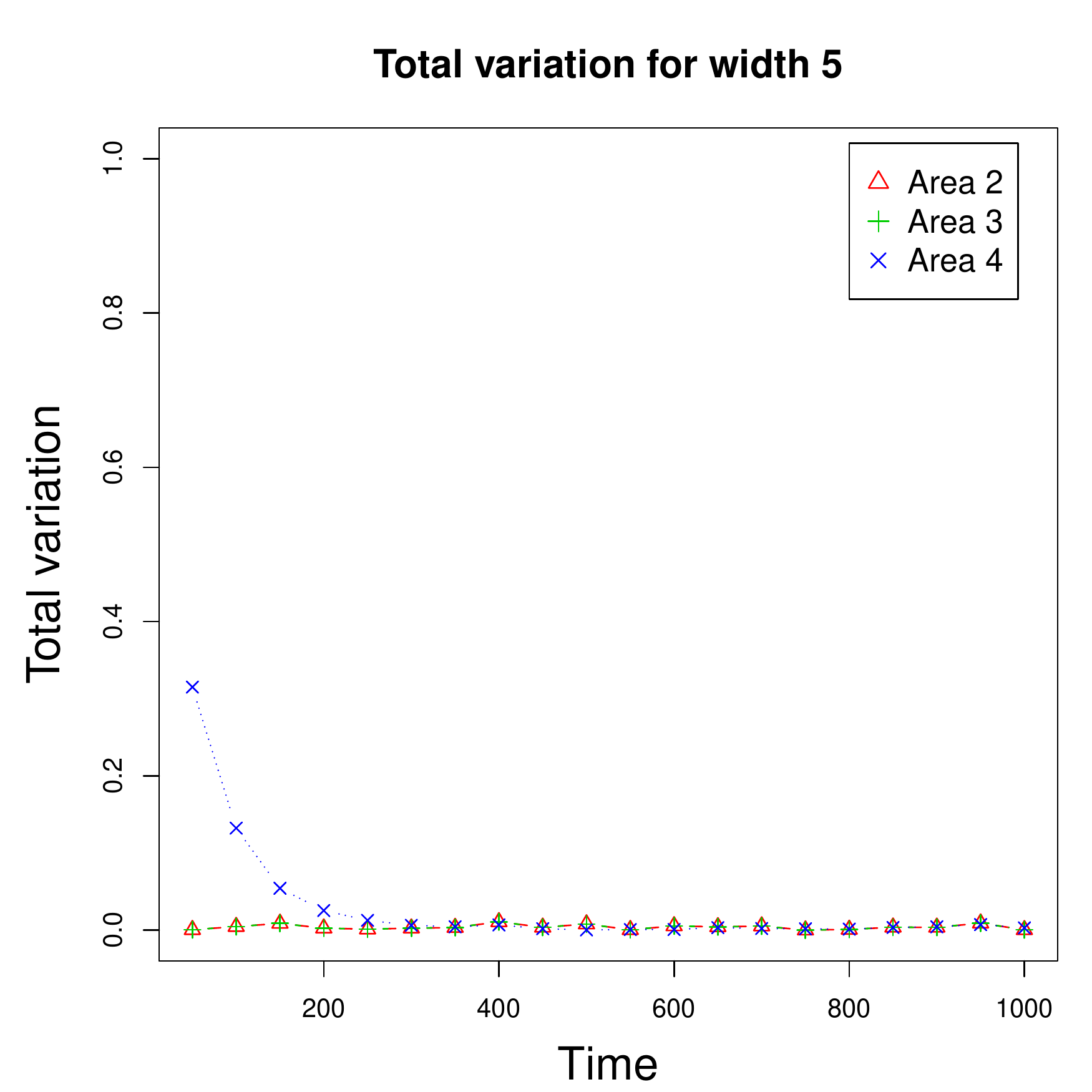}
	\hfill%
	\includegraphics[width=.49\linewidth]{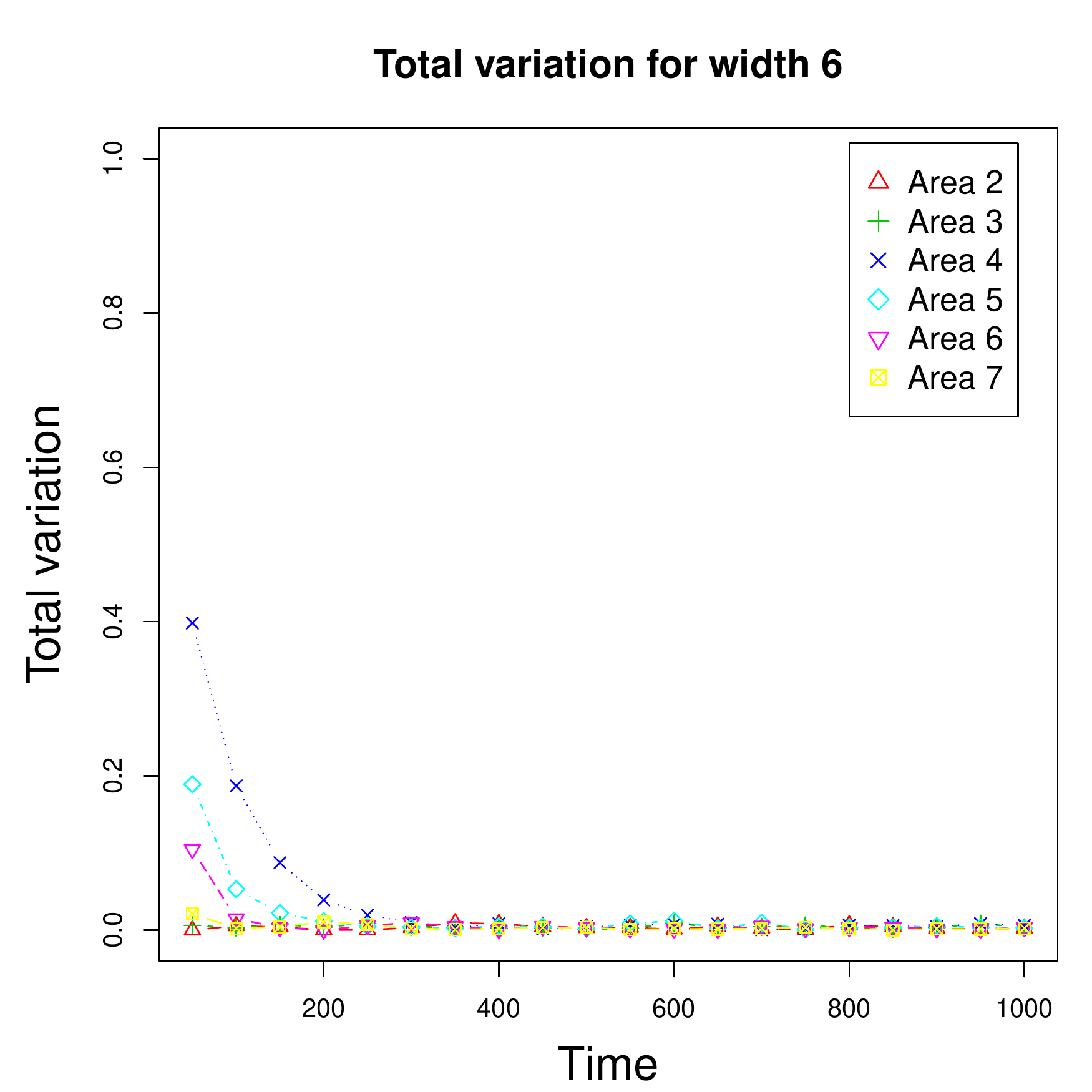}
	\caption{Total variation for width $n=5$ (left), respectively $n=6$ (right), and every area $A \le (n/2)^2$ with $|S(n,A)|>1$.}
\end{figure}

\begin{figure}
	\centering
	\includegraphics[width=.49\linewidth]{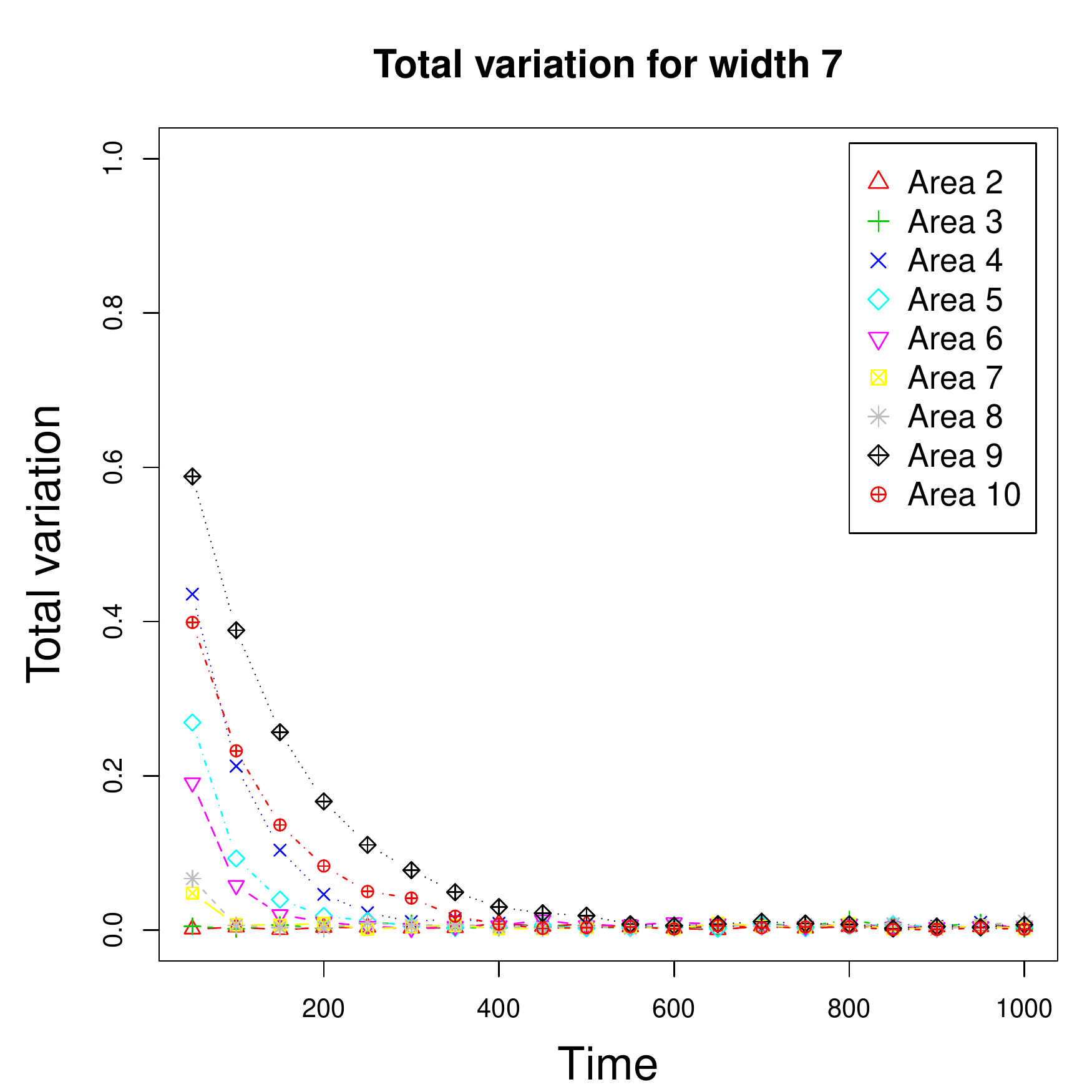}
	\hfill%
	\includegraphics[width=.49\linewidth]{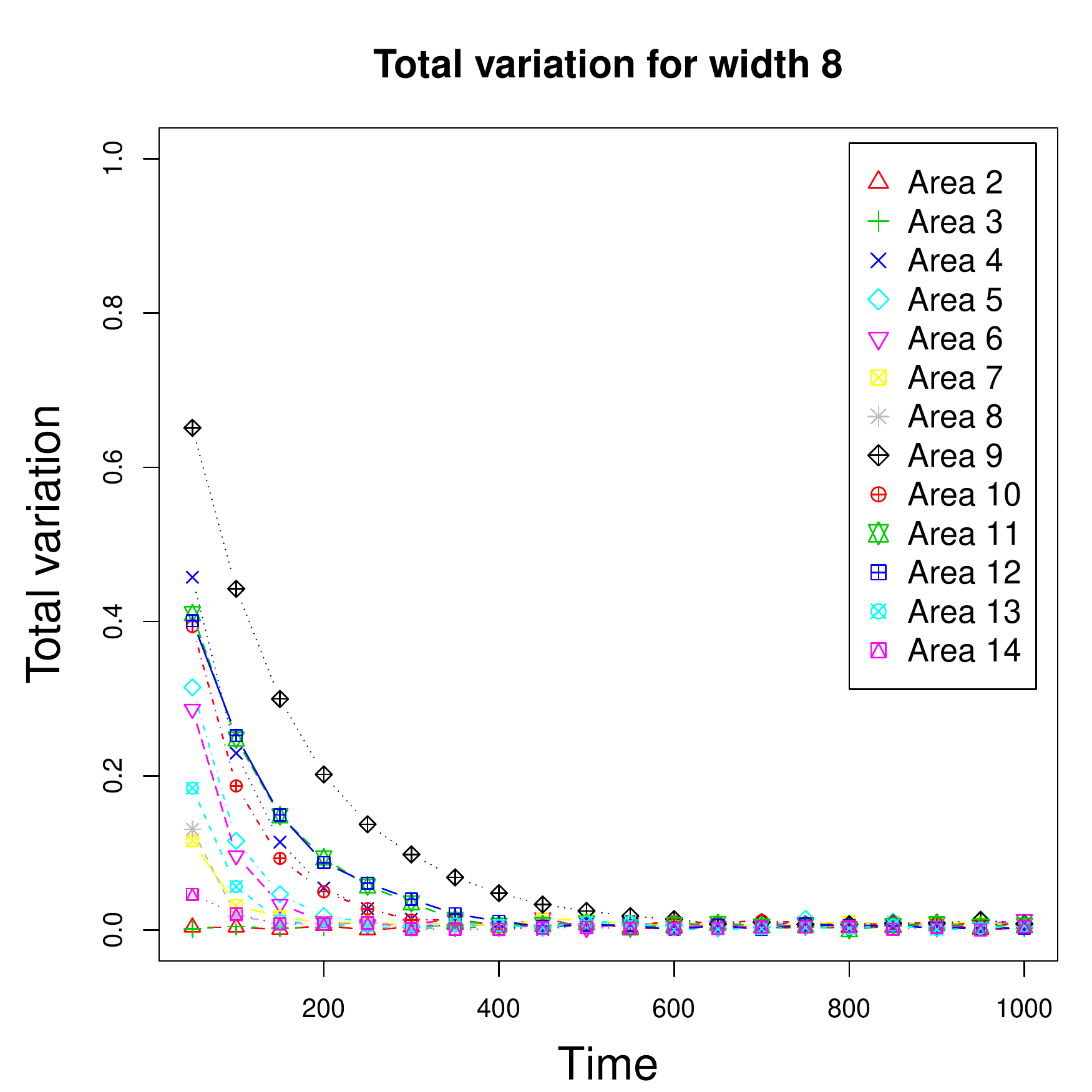}
	\caption{Total variation for width $n=7$ (left), respectively $n=8$ (right), and every area $A \le (n/2)^2$ with $|S(n,A)|>1$.}
\end{figure}

\begin{figure}
	\centering
	\includegraphics[width=.49\linewidth]{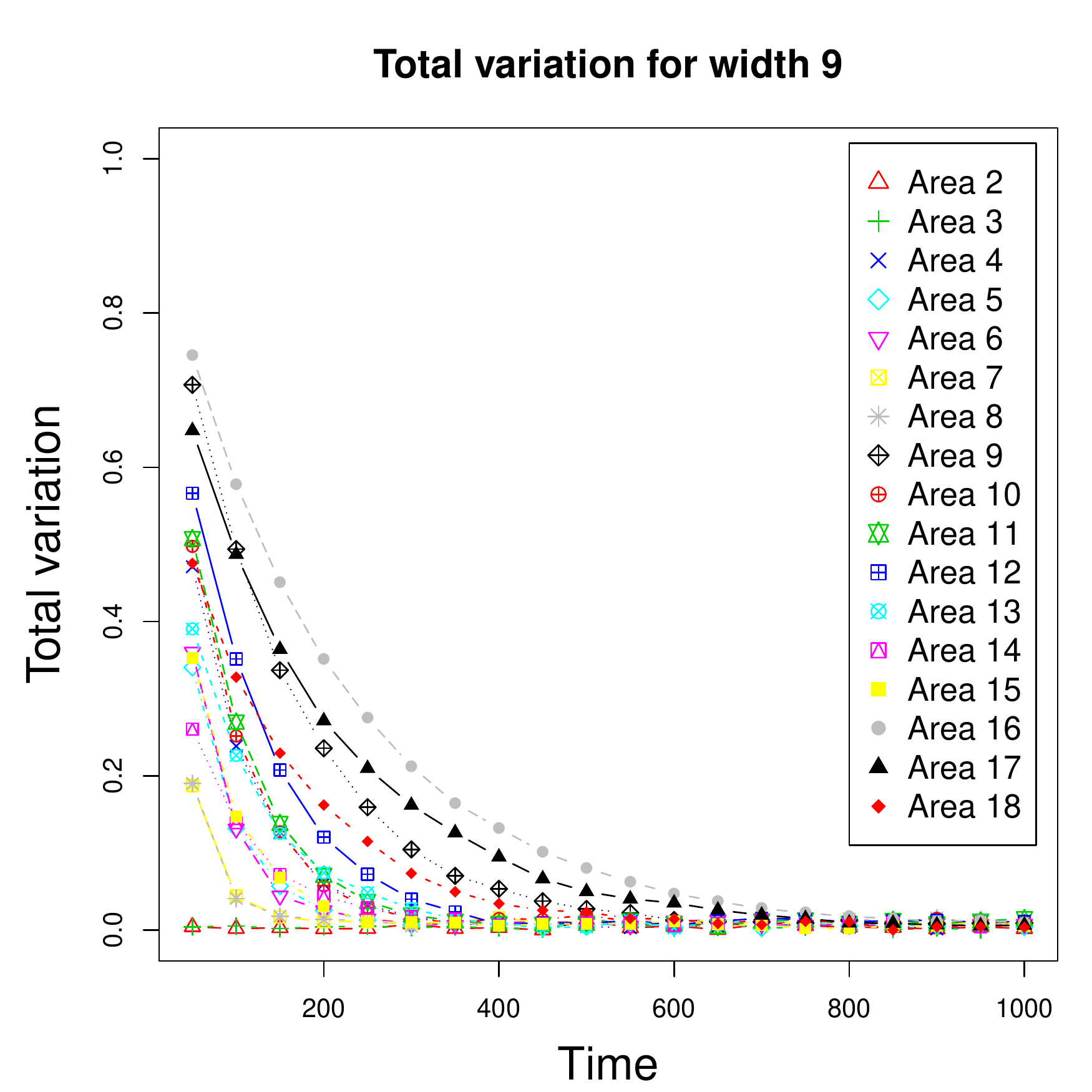}
	\hfill%
	\includegraphics[width=.49\linewidth]{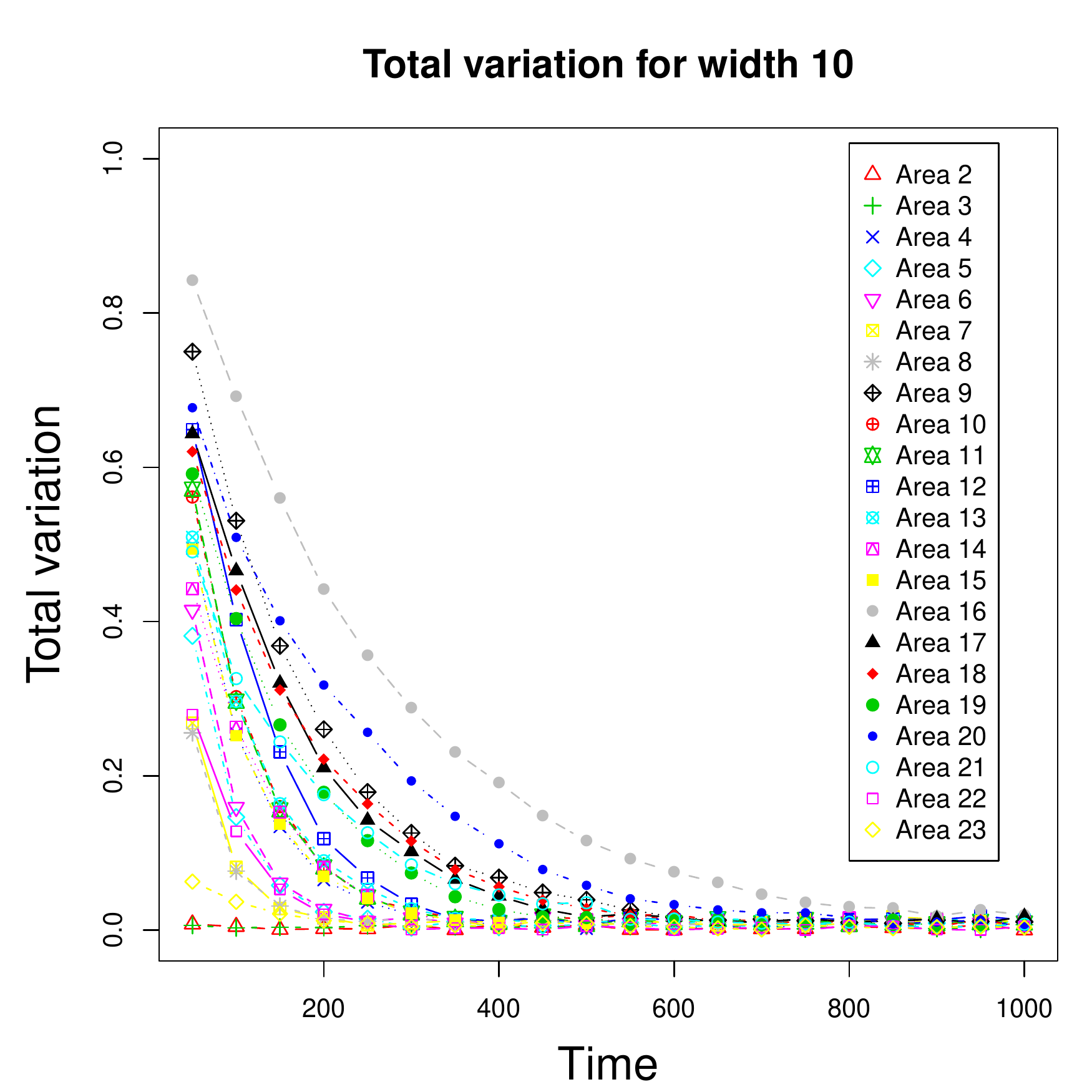}
	\caption{Total variation for width $n=9$ (left), respectively $n=10$ (right), and every area $A \le (n/2)^2$ with $|S(n,A)|>1$.}
\end{figure}

\begin{figure}
	\centering
	\includegraphics[width=.49\linewidth]{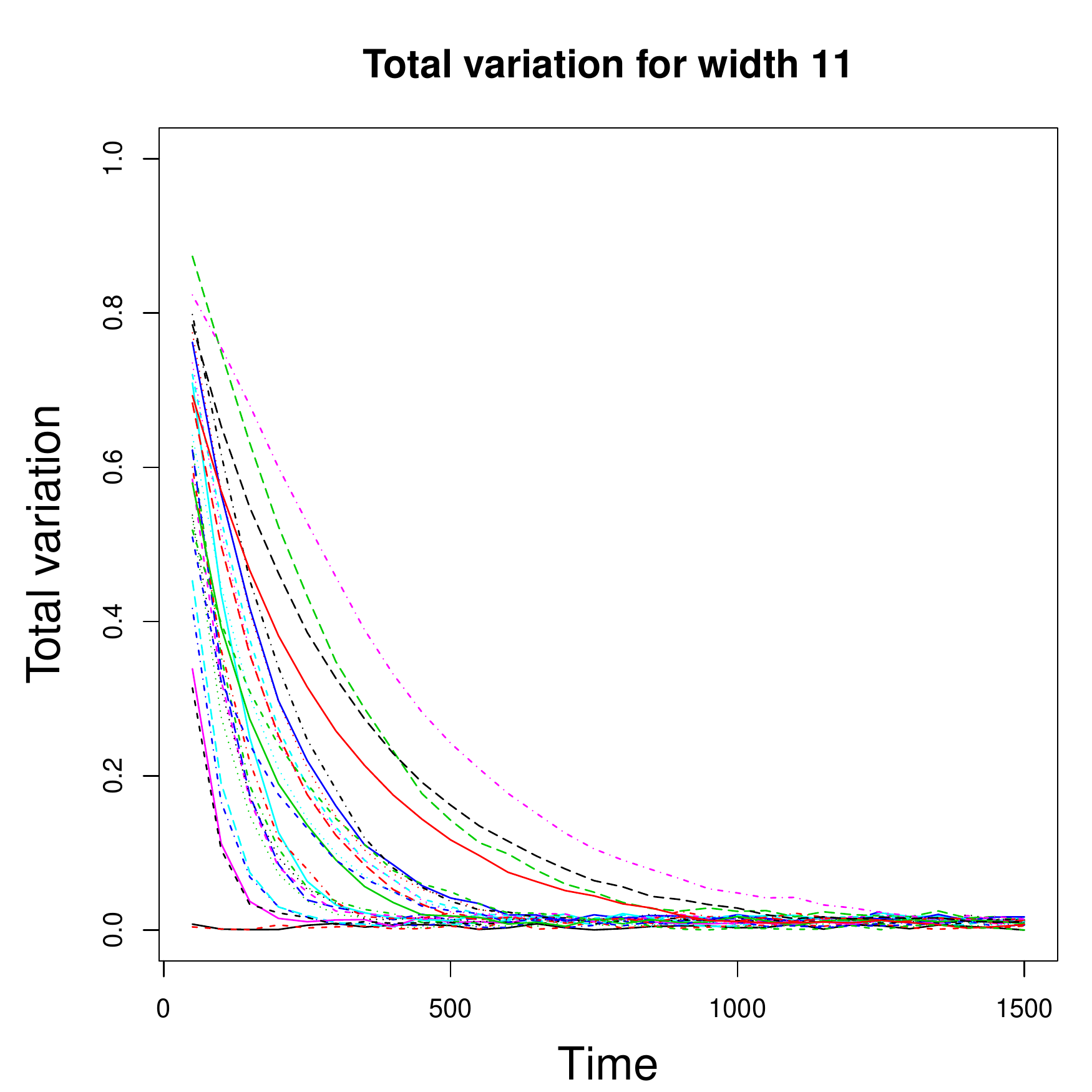}
	\hfill%
	\includegraphics[width=.49\linewidth]{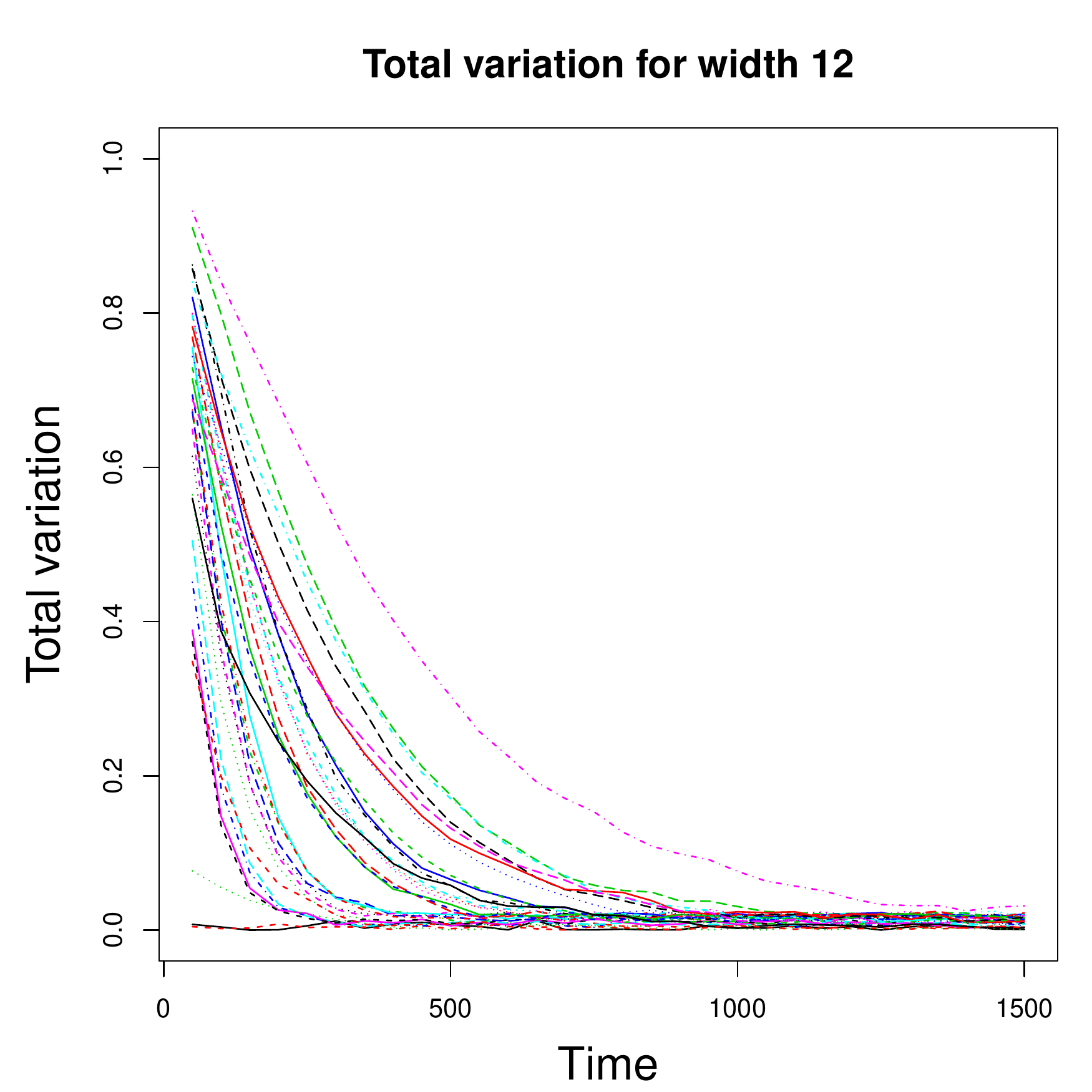}
	\caption{Total variation for width $n=11$ (left), respectively $n=12$ (right), and every area $A \le (n/2)^2$ with $|S(n,A)|>1$.}
\end{figure}

\newpage
\section{Top-Down Dynamic Programming Approach}
\label{app:DP-top-down}
For the sake of clarity we first describe the dynamic programming procedure for counting \emph{unweighted} Motzkin paths of any given width $n$ and area $A$.\footnote{Though this task can be solved by an easier dynamic programming, our approach does extend to the weighted case, which is our main goal.}
 Specifically, we define the following subproblem:
\begin{linenomath}
\begin{align*}
	M(n,A,h,p) =& \text{ the number of Motzkin paths of width $n$,}\\
	 & \text{area $A$, height $h$ and with $p_h = p$ and $f_h = 0$.}
\end{align*}
\end{linenomath}
Note that we forbid flats of height $h$ and fix the number of peaks at height $h$. To count the number of paths of a given $n$, $A$ and height $h$ including flats at height $h$ (so dropping the $f_h=0$ condition) and not fixing the number of peaks at height $h$ we can simply compute 
\[M(n,A,h) := M(n,A,h+1,0).\]
The number of Motzkin paths of given width and area can be obtained by summing over all $h$:
\[ M(n,A) := \sum_{h=0}^n M(n,A,h). \]
If we also sum over all possible areas, we get the classic Motzkin numbers
\[ M(n) := \sum_{A=0}^{n^2} M(n,A). \]
We next show that $M$ can be computed in polynomial time. We make use of Equation~\eqref{eq:ma-product} in Theorem~\ref{th:blocks_rearrange} for the number $m(a)$ of Motzkin paths that can be constructed out of building sequence $a$.
\begin{theorem}
	Table $M(n,A,h,p)$ can be computed for all the  $\mathcal{O}(n^5)$ many possible parameters in $\mathcal{O}(n^7)$ total time.
\end{theorem}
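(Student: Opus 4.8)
The plan is to turn the product formula for $m(a)$ from Theorem~\ref{th:blocks_rearrange} into a level-by-level recurrence, processing the building sequence from the top level $h$ downward and carrying the current peak count as the recursion parameter, and then to count states and the work per state. I would start from the identity $M(n,A,h,p)=\sum_{a}m(a)$, where the sum ranges over all $a=(f_0,p_1,\dots,p_h,f_h)\in S(n,A)$ with $p_h=p$ and $f_h=0$. The decisive feature of Equation~\eqref{eq:ma-product} is that it factors level by level: level $i$ contributes a \emph{flats factor} $\binom{p_{i+1}+f_i}{f_i}$ and a \emph{valleys factor} $\binom{p_{i+1}+f_i+p_i-1}{p_i-1}$ (with the convention $p_{h+1}:=0$). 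This telescoping structure is exactly what allows a sub-problem that fixes only the top level to be reassembled from one fixing the next level down.

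The main step is to peel off the top two levels. Since $f_h=0$, the level-$h$ valleys factor is $\binom{p-1}{p-1}=1$. Writing $g:=f_{h-1}$ and $q:=p_{h-1}$, the level-$(h-1)$ factors are $\binom{p+g}{g}$ and $\binom{p+g+q-1}{q-1}$; in terms of Equations~\eqref{eq:widthI}--\eqref{eq:areaI}, the $p$ peaks of height $h$ occupy width $2p$ and area $(2h-1)p$, while the $g$ flats of height $h-1$ occupy width $g$ and area $g(h-1)$. The remaining factors of $m(a)$, those for levels $h-2,\dots,0$, are precisely what is counted by $M(n-2p-g,\,A-(2h-1)p-g(h-1),\,h-1,\,q)$, because that sub-problem has $f_{h-1}=0$, so its own internal level-$(h-1)$ valleys factor is $\binom{q-1}{q-1}=1$ and contributes nothing extra. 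This yields, for $h\ge 2$,
\begin{equation*}
M(n,A,h,p)=\sum_{g\ge 0}\sum_{q\ge 1}\binom{p+g}{g}\binom{p+g+q-1}{q-1}\,M\bigl(n-2p-g,\;A-(2h-1)p-g(h-1),\;h-1,\;q\bigr).
\end{equation*}
The constraint $q\ge 1$ encodes $p_{h-1}>0$ from Definition~\ref{def:feasible_sequence}, which holds at every level $1\le i\le h$ of a genuine path of height $h$; the valleys factor conveniently vanishes for $q=0$ anyway.

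For the complexity bound I would first precompute all needed binomial coefficients once in $O(n^2)$ time. The base cases are the flat-only ground level ($M(n,0,0,0)=1$) and level one: since there are no peaks of height $0$, $M(n,A,1,p)=\binom{n-p}{n-2p}$ when $A=p$ and $n\ge 2p$, and $0$ otherwise. There are $O(n^5)$ parameter tuples, as $n$ ranges over $O(n)$ values, $A$ over $O(n^2)$, $h$ over $O(n)$, and $p$ over $O(n)$. Filling the table in order of increasing $h$ makes every height-$(h-1)$ entry available when an entry at height $h$ is computed, and each such entry is a double sum over $O(n)$ choices of $g$ and $O(n)$ choices of $q$, i.e.\ $O(n^2)$ constant-time terms. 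Hence the total running time is $O(n^5)\cdot O(n^2)=O(n^7)$, as claimed.

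The step I expect to require the most care is the factor bookkeeping in the recurrence: one must check that peeling off level $h-1$ leaves exactly the factors captured by the height-$(h-1)$ sub-problem, in particular that the level-$(h-1)$ valleys factor is split correctly (it equals $1$ inside the sub-problem because $f_{h-1}=0$ there, while the ``true'' value $\binom{p+g+q-1}{q-1}$ is supplied by the transition), and that the width/area offsets $2p+g$ and $(2h-1)p+g(h-1)$ match Equations~\eqref{eq:widthI}--\eqref{eq:areaI}. Correctly treating the flat-only ground level together with the $p_i>0$ constraint is the remaining delicate point; everything else is routine.
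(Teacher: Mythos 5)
Your proof is correct and takes essentially the same route as the paper: your recurrence is exactly the paper's Equation~\eqref{eq:DP} (with $g=f$, $q=p'$), derived from the same level-by-level factorization of $m(a)$ via the binomial factors of Equations~\eqref{eq:peak_combos} and~\eqref{eq:flat_combos}, with identical offsets $n'=n-2p-g$ and $A'=A-(2h-1)p-(h-1)g$, and the same complexity count of $\mathcal{O}(n^5)$ states times $\mathcal{O}(n^2)$ work per state. Your only deviation is the explicit (and correct) level-$1$ base case $M(n,A,1,p)=\binom{n-p}{n-2p}$ for $A=p$, which is if anything a cleaner treatment of the bottom of the recursion than the paper's base-case clause.
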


\begin{proof}
	Recall that in $M$ we forbid flats of height $h$ and fix the number of peaks of height $h$. This allows the following recursive counting:
	
	\begin{description}
		\item[Base case.] If any of the parameters is equal $0$ then
		\begin{linenomath}
		\begin{align}
			\label{eq:DP_base}
				& M(n,A,h,p) := & 
				\begin{cases}
					1 \quad \text{ if } n=A=h=p=0,\\
					0 \quad\text{ else if } n \leq 0 \text{ or } A \leq 0 \text{ or } h \leq 0.
				\end{cases}
		\end{align}
		\end{linenomath}
		\item[Recursion.] Otherwise, when all parameters are strictly positive, we have
	\end{description}
		\begin{align}	
		 	M(n,A,h,p) :=& 
			\sum_{f=0}^n {p+f\choose f} \cdot \left( \sum_{p' = 1}^{n/2} {p+f+p'-1 \choose p'-1} \cdot M(n',A',h-1,p')\right),  	\label{eq:DP}
	    \end{align}
	 where $n'=n-2p-f$ and $A' = A - (2h-1)p - (h-1)f$.
	
	In the recursive case, we enumerate all potential numbers of flats and peaks at height $h-1$ so that we can look up the corresponding subproblems.
	These subproblems are then weighted by the number of possible interleavings ${p+f\choose f}{p+f+p'-1 \choose p'-1}$ which we derived in Equations~\eqref{eq:peak_combos} and~\eqref{eq:flat_combos}.
	We have thus shown that the table $M(n,A,h,p)$ above can be computed for all the $\mathcal{O}(n^5)$ many possible parameter values in time $\mathcal{O}(n^7)$ 
	with the bottleneck being the two nested sums in Equation~\eqref{eq:DP}.
\end{proof}

Let us now consider the problem of counting \emph{weighted} Motzkin paths, that is, the function $D(n,d)$  in Equation~\eqref{eq:displacement_formula_via_paths}.
To this end, we extend the definition of $M$ above so to count each path $\mz$ according to its weight $perm(a^{(\mz)})$. 
The resulting table $D(n,d,h,p)$ can be computed recursively in a top-down fashion by incorporating into the recursion of $M(n,A,h,p)$ the two terms defining $perm(a^{(\mz)})$ in Equation~\eqref{eq:perm-product}:
\begin{linenomath}
\begin{align}
\begin{split}
D(n,d,h,p) :=& \underbrace{h^{2p}}_{\text{peak options}} \cdot \sum_{f=0}^n \underbrace{(2h-1)^{f}}_{\text{flat options}} \cdot {p+f\choose f} \cdot \\
&\left( \sum_{p' = 1}^{n/2} {p+f+p'-1 \choose p'-1} \cdot D(n',d',h-1,p')\right),
\end{split}
\end{align}
\end{linenomath}
where $n'=n-2p-f$ and $d' = d - (2h-1)p - (h-1)f$.
The base case is identical to the unweighted case Equation~\eqref{eq:DP_base}.
Again, we can drop the condition that there are no flats on the last level $h$:
\begin{linenomath}
\begin{align*}
	D(n,d,h) &:= D(n,d,h+1,0),  
	\intertext{and count all weighted Motzkin paths of given area and width simply as}  
	D(n,d) &:= \sum_{h=0}^n D(n,d,h). 
\end{align*}
\end{linenomath}

Both the time for computing $D(n,d,h,p)$ and its overall space are asymptotically the same as those used for $M(n,A,h,p)$. We have thus proven the following:
\begin{corollary}
	Table $D(n,d,h,p)$ and its marginals $D(n,d,h)$ and $D(n,d)$ can be computed for all $\mathcal{O}(n^5)$ many possible parameter values in $\mathcal{O}(n^7)$ total time. 
\end{corollary}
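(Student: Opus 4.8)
The plan is to verify that the displayed recurrence for $D(n,d,h,p)$ is correct and then to reuse the complexity count from the unweighted theorem for $M(n,A,h,p)$ almost verbatim, since the two recurrences have the same shape and differ only by scalar weight factors.

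For correctness, I would first invoke the factorization of the weight: by Equation~\eqref{eq:perm-product}, $\perm(a)$ is a product of one factor per level, with peaks of height $i$ contributing $i^{2p_i}$ and flats of height $i$ contributing $(2i+1)^{f_i}$. This is the same level-by-level splitting that turns $m(a)$ in Equation~\eqref{eq:ma-product} into a product of binomial coefficients. Because both $m(a)$ and $\perm(a)$ split across levels, the top-down recursion used for $M$ can be decorated with the weight of the single level it peels off without changing its recursive structure. Concretely, when descending from height $h$ to height $h-1$ while fixing $p$ peaks of height $h$ and choosing $f$ flats and $p'$ peaks of height $h-1$, the $p$ peaks of height $h$ supply the prefactor $h^{2p}$ and each of the $f$ flats of height $h-1$ supplies $2(h-1)+1 = 2h-1$, giving the factor $(2h-1)^f$; the binomial coefficients $\binom{p+f}{f}$ and $\binom{p+f+p'-1}{p'-1}$ are inherited verbatim from Equations~\eqref{eq:flat_combos} and~\eqref{eq:peak_combos}, and the updates $n' = n-2p-f$, $d' = d-(2h-1)p-(h-1)f$ remove exactly the width and area of the peeled-off pieces. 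The base case is identical to~\eqref{eq:DP_base}.

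The cleanest way to conclude correctness, and the one step carrying the actual content, is to note that the unweighted recurrence~\eqref{eq:DP} already establishes that the recursion enumerates each building sequence $a$ with total multiplicity exactly $m(a)$; multiplying in the per-level weight factors therefore produces $m(a)\cdot\perm(a) = P(a)$ for each $a$, which by Corollary~\ref{cor:mapping} (Equation~\eqref{eq:perm-count-via-sequences}) is precisely the weighted path count attached to $a$. The main obstacle is exactly this bookkeeping: one must make sure that each level's weight is charged once and aligned with the corresponding $(n',d')$ decrement, so that no factor is double counted or shifted by one level relative to the unweighted skeleton.

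For the marginals and the running time I would copy the unweighted argument. Dropping the $f_h=0$ restriction is handled by $D(n,d,h):=D(n,d,h+1,0)$, and $D(n,d):=\sum_{h=0}^n D(n,d,h)$ sums over the top height, each adding only $\bigO(n)$ work per fixed $(n,d)$. There are $\bigO(n^5)$ parameter tuples since $d\le n^2$, $h\le n$ and $p\le n$, and each table entry is computed by the two nested sums over $f$ and $p'$ in $\bigO(n^2)$ time, for $\bigO(n^7)$ total; the space matches that of $M$. The only difference from the unweighted case is multiplication by the small weights $h^{2p}$ and $(2h-1)^f$, which does not affect the asymptotics.
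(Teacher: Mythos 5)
Your proposal is correct and follows essentially the same route as the paper, which likewise obtains $D(n,d,h,p)$ by decorating the unweighted recursion~\eqref{eq:DP} with the per-level factors $h^{2p}$ and $(2h-1)^f$ from Equation~\eqref{eq:perm-product}, keeps the base case~\eqref{eq:DP_base} and the marginals $D(n,d,h):=D(n,d,h+1,0)$ and $D(n,d):=\sum_h D(n,d,h)$, and concludes that the time and space are asymptotically those of $M(n,A,h,p)$, i.e.\ $\mathcal{O}(n^5)$ states at $\mathcal{O}(n^2)$ work each. Your extra bookkeeping step --- that the unweighted skeleton enumerates each building sequence $a$ with multiplicity exactly $m(a)$, so the per-level weights multiply to $P(a)=m(a)\cdot\perm(a)$ and sum to $D(n,d)$ via Corollary~\ref{cor:mapping} --- merely makes explicit what the paper leaves implicit.
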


\end{document}